\documentclass[11pt]{article}
 
 \usepackage[letterpaper, margin=1in]{geometry}

\usepackage{amsmath,amssymb, amsthm, mathrsfs} 
\usepackage{xspace}
\usepackage{color}
\usepackage{subfig}
\usepackage{hyperref}
\usepackage{url}

\usepackage{wrapfig}
\usepackage{chngcntr}
\counterwithin{figure}{section}
\counterwithin{table}{section}
\usepackage{graphicx}
\DeclareGraphicsExtensions{.pdf}

\usepackage{thmtools}
\usepackage{thm-restate}

\usepackage{microtype}
\usepackage[T1]{fontenc}
\usepackage{charter}
\usepackage[expert]{mathdesign}

\usepackage{enumitem}
\setlist[description]{font=\normalfont\scshape}

\newtheorem{theorem}{Theorem}[section]
\newtheorem{definition}[theorem]{Definition}

\newtheorem{lemma}[theorem]{Lemma}

\newtheorem{proposition}[theorem]{Proposition}

\newcommand{\retheorem}{lemma}
\newcommand{\restatetheorem}[1]{\renewcommand{\retheorem}{rlemma}#1\renewcommand{\retheorem}{lemma}}

\renewcommand{\phi}{\varphi}

\newcommand{\calD}{\ensuremath{\mathcal{D}}\xspace}
\newcommand{\calH}{\ensuremath{\mathcal{H}}\xspace}

\newcommand{\matching}[1][\mathsf{V W}]{$#1$-matching\xspace}
\newcommand{\matchings}[1][\mathsf{V W}]{$#1$-matchings\xspace}

\newcommand{\free}[1]{\ensuremath{#1}-free\xspace}

\newcommand{\memory}[1]{\ensuremath{\mathfrak{M}_{#1}}\xspace}

\newcommand{\FPF}{flippable product-family\xspace}
\newcommand{\FPFs}{flippable product-families\xspace}

\newcommand{\WS}[1]{$#1$-winning strategy\xspace}
\newcommand{\WSs}[1]{$#1$-winning strategies\xspace}

\newcommand{\PCR}  {\ensuremath{\mathsf{PCR}}\xspace}
\newcommand{\RES}  {\ensuremath{\mathsf{RES}}\xspace}

\DeclareMathOperator{\Space}{MSpace}
\DeclareMathOperator{\dom}{dom}
\newcommand{\st}{\ :\ }

\newcommand{\Star}{\ensuremath{\lambda}\xspace}

\newcommand{\HH}{\ensuremath{\mathcal{H}}\xspace}
\newcommand{\R}{\ensuremath{\mathcal{R}}\xspace}

\newcommand{\xvec}[3]{\ensuremath{{#1}_{#2},\ldots,{#1}_{#3}}\xspace}

\newcommand{\CoverGame}[1]{%
\ensuremath{\mathsf{CoverGame}_{\mathsf{VW}}(#1)}\xspace%
}

\newcommand{\LL}{\ensuremath{\mathscr{L}}\xspace}
\newcommand{\FF}{\ensuremath{\mathscr{F}}\xspace}

\newcommand{\Choose}{{\sf Choose}\xspace}
\newcommand{\Cover}{{\sf Cover}\xspace}
\newcommand{\CovGam}{{\sf CoverGame}\xspace}

\newcommand{\matchingprop}{\matching property\xspace}
\newcommand{\expander}[1]{$(#1)$-bipartite expander\xspace}

\newcommand{\dfn}[1]{\textbf{\textit{#1}}\xspace}

\newcommand{\bfrac}[2]{\left(\frac{#1}{#2}\right)}
\newcommand{\sqbs}[1]{\left[ #1 \right]}

\begin{document}
\title{Space proof complexity for random $3$-CNFs}
\author{
Patrick Bennett%
\footnote{
Computer Science Department, University of Toronto, 10 Kings College Road, M5S 3G4 Toronto, Canada, \texttt{\{patrickb, molloy\}@cs.toronto.edu}.
}
\and
Ilario Bonacina%
\footnote{
Computer Science Department, Sapienza University of Rome, via Salaria 113, 00198 Rome, Italy, \texttt{\{bonacina, galesi, huynh, wollan\}@di.uniroma1.it}.
}
\and 
Nicola Galesi$^\dagger$
\and 
Tony Huynh$^\dagger$%
\footnote{Supported by the European Research Council under the European Union's Seventh Framework Programme (FP7/2007-2013)/ERC Grant Agreement no. 279558.}
\and 
Mike Molloy$^*$
\and 
Paul Wollan$^{\dagger \ddagger}$
}
\date{}
\maketitle
\thispagestyle{empty}

\begin{abstract}
 We investigate the space complexity of refuting  $3$-CNFs in Resolution and  algebraic systems.  
We prove  that every {\em Polynomial Calculus with Resolution} refutation of a random $3$-CNF $\phi$ in $n$ variables requires, with high probability, $\Omega(n)$  {\em distinct monomials} to be kept simultaneously in memory.  
The same construction also proves that every {\em Resolution} refutation $\phi$ requires, with high probability, $\Omega(n)$ clauses each of width $\Omega(n)$ to be kept at the same time in memory. This gives a $\Omega(n^2)$ lower bound for the {\em total space} needed in Resolution to refute $\phi$. These results are best possible (up to a constant factor) and answer questions about space complexity of $3$-CNFs posed in \cite{FilmusLNTR12,FilmusLMNV13,BonacinaGT14,bg15}.

The main technical innovation is a variant of  {\em Hall's Lemma}.   
We show that in bipartite graphs $G$  with bipartition $(L,R)$ and left-degree at most 3, $L$ can be covered by certain families of disjoint paths, called {\em \matchings}, provided that $L$ {\em expands} in $R$ by a factor of $(2-\epsilon)$, for $\epsilon < \frac{1}{23}$. 
\end{abstract}

\newpage
\setcounter{page}{1}
\section{Introduction}
During the last decade, an active line of research in proof
complexity has been the space complexity of proofs and how space
is related to other complexity measures (like size, length, width, degree) \cite{EstebanT01, AlekhnovichBRW02,Ben-SassonG03,Ben-Sasson02,AtseriasD08,Ben-SassonN08,Nordstrom09, BenSassonN11,FilmusLNTR12,FilmusLMNV13,BonacinaGT14,bg15}. 
This investigation has raised several important foundational questions.  Some of these have been solved, while several others are still open and challenging (see \cite{Nordstrom13} for a survey on this topic). 
Space of proofs concerns the minimal memory occupation of algorithms verifying the correctness of proofs in concrete propositional proof systems, and is thus also relevant in more applied algorithmic contexts.  
For instance, a major problem in state of the art SAT-solvers is memory consumption. In proof complexity, this resource is modeled by proof space.  
It is well-known that SAT-solvers used in practice (like CDCL) are based on low-level proof systems such as {\em Resolution}. 

In this work we focus on two well known proof systems that play a central role in proof complexity: {\em Resolution} \cite{Robinson:1965,Blake37} and {\em Polynomial Calculus} \cite{CleggEI96}. 
Resolution (\RES) is a refutational proof system for unsatisfiable propositional CNF formulas using only one logical rule: $ \frac{A \vee x \;\;\;\;\; \neg x \vee B}{A\vee B}$.  
Polynomial calculus is an algebraic refutational proof system for unsatisfiable sets of polynomials (over $\{0,1\}$ solutions) based on two rules: {\em linear combination} of polynomials and {\em multiplication} by variables.  In this article, we consider the stronger system {\em Polynomial Calculus with Resolution} (\PCR) which extends both Resolution and Polynomial Calculus \cite{AlekhnovichBRW02}.

Several different measures for proof space  were investigated for these two systems \cite{EstebanT01,AlekhnovichBRW02,Ben-Sasson02,AtseriasD08,Ben-SassonN08,Nordstrom09, BenSassonN11,FilmusLNTR12,BonacinaGT14,bg15}.  
In this work we focus on {\em total space} (for \RES), which is the maximum number of variables (counted with repetitions) to be kept simultaneously in  memory while verifying a proof; and {\em monomial space} (for \PCR), which is the maximum number of distinct monomials to be kept simultaneously  in  memory while verifying a proof.  
Both measures were introduced in  \cite{AlekhnovichBRW02}, where some preliminary lower and upper bounds were given. In particular, for every unsatisfiable CNF in $n$ variables, there is an easy upper bound
of $O(n)$ for monomial space in \PCR and $O(n^2)$ for total space in \RES.

Major open problems about these two measures were solved only recently in \cite{FilmusLMNV13, bg15,BonacinaGT14}.  
In particular, \cite{bg15,BonacinaGT14} prove that, for $r\geq 4$, random $r$-CNFs over $n$ variables require $\Theta(n^2)$ total space in resolution and $\Theta(n)$ monomial space in \PCR. 
However, it is not at all obvious how to generalize the techniques in  \cite{bg15,BonacinaGT14} to handle $3$-CNFs.  Indeed, it is an open problem whether there is any family of $3$-CNFs requiring large {\em total space} (in \RES) and {\em monomial space} (in \PCR).  In this work, we resolve this problem by proving that random $3$-CNFs also require $\Theta(n^2)$ {\em total space} (in \RES) and  $\Theta(n)$ {\em monomial space} (in \PCR).  

\paragraph{Results.}

Let $\phi$ be a random $3$-CNF in $n$ variables.  We prove that every \PCR  refutation of $\phi$  requires, with high probability, $\Omega(n)$  distinct monomials to be kept simultaneously in memory (Theorem \ref{cor:rand}).  
Moreover, every \RES refutation of $\phi$ has, with high probability, $\Omega(n)$ clauses each of width $\Omega(n)$ to be kept at the same time in memory (Theorem \ref{cor:rand}). This gives a $\Omega(n^2)$ lower bound for the total space of every \RES refutation of $\phi$.  These results resolve questions about space complexity of $3$-CNFs mentioned in \cite{FilmusLMNV13,BonacinaGT14,bg15,FilmusLNTR12}.

Both results follow using the framework proposed in \cite{bg15}, where the construction of suitable families of assignments called {\em \WSs{k}}  (Definition \ref{def:kex}) leads to monomial space lower bounds in \PCR (Theorem \ref{thm:lowerbound}).
This construction is made possible by a modification of Hall's Lemma \cite{Hall} for matchings to \matchings (Lemma \ref{lem:Hall}). 

\begin{definition}[\matching]\label{def:h-k-matchings}
Let $G$ be a bipartite graph with bipartition $(L,R)$. A \dfn{\matching} in $G$ is a subgraph $F$ of $G$ such that each connected component of $F$ is a path with at most $4$ edges and both endpoints in $R$.
A \matching $F$ {\em covers} a set of vertices $S$ if $S\subseteq V(F)$. Define $L(F)=V(F)\cap L$ and $R(F)=V(F)\cap R$.
\end{definition}

\begin{wrapfigure}[17]{l}{.2\textwidth}
\scriptsize
\centering
\includegraphics{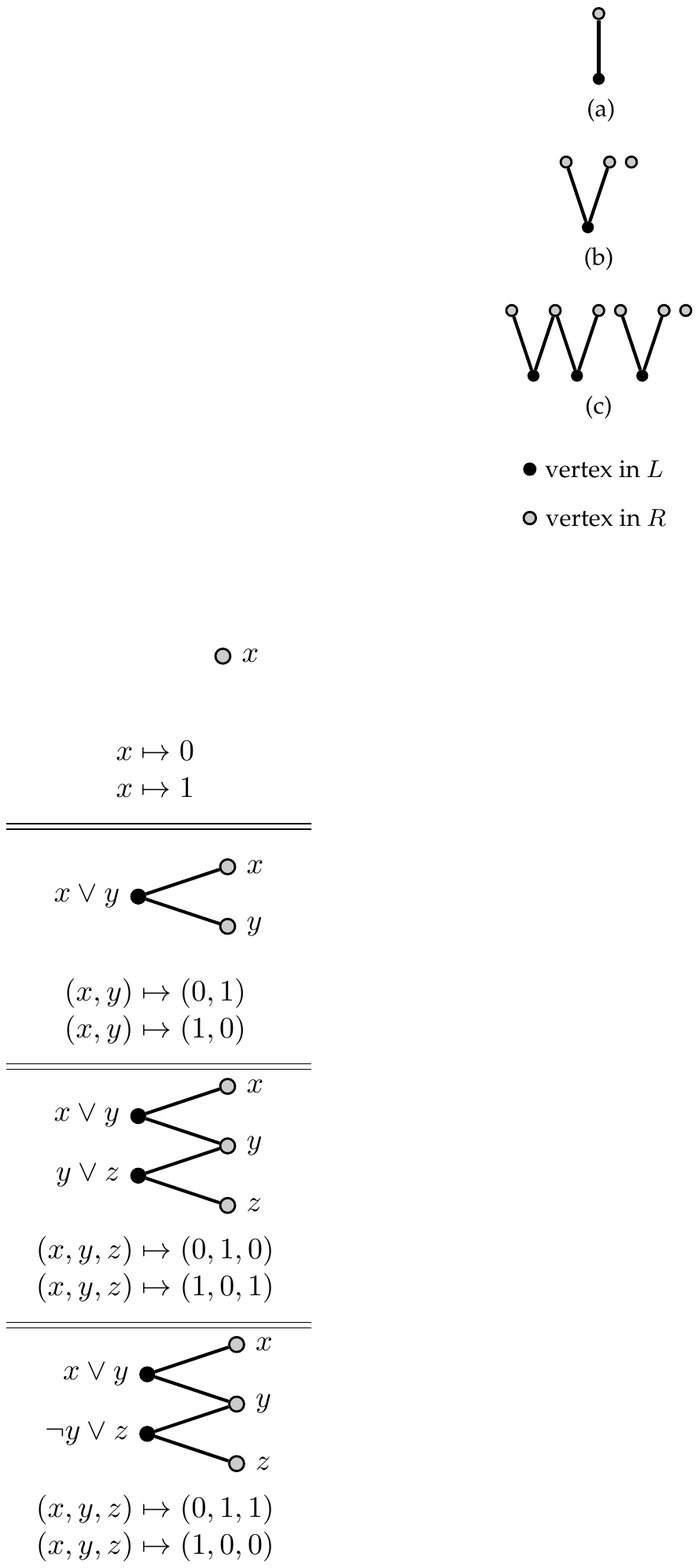}
\caption{%
}
\label{fig:matching}
\end{wrapfigure}

Figure \ref{fig:matching} compare {\em matchings} (Figure \ref{fig:matching}.(a)), {\em 2-matchings} as used in \cite{BonacinaGT14,bg15} (Figure \ref{fig:matching}.(b)) and \matchings (Figure \ref{fig:matching}.(c)). Note that for technical reasons, we allow 2-matchings and  \matchings to contain isolated vertices from $R$.
We can now state our variant of Hall's Lemma. 
This lemma and its proof are independent from the proof complexity results and might be useful in other contexts. 

\begin{restatable}[$(2-\epsilon)$-Hall's Lemma]{\retheorem}{restateHall}\label{lem:Hall}
Let $ \epsilon < \frac{1}{23}$.  Let $G$ be a bipartite graph with bipartition $(L,R)$ such that each vertex in $L$ has degree at most $3$ and no pair of degree $3$ vertices in $L$ have the same set of neighbors. If $|N_G(L)|\geq (2-\epsilon)|L|$, and each proper subset of $L$ can be covered by a \matching, then $L$ can be covered by a \matching. 
\end{restatable}

Note that the converse of Lemma \ref{lem:Hall} does not hold (unlike in Hall's Lemma).

\paragraph{Outline of the paper.}

Section \ref{sec:prelimin} contains some preliminary notions about proof complexity. In particular, the formal definitions of Resolution and Polynomial Calculus with Resolution, the model of space (based on \cite{EstebanT01, AlekhnovichBRW02}) and the formal definition of total space and monomial space. We present a simplified (but less general) version of the \WSs{k} of \cite{bg15} (Definition \ref{def:kex}). 
These \WSs{k} were used in \cite{bg15} to prove monomial space lower bounds for \PCR. Here we use the  same \WSs{k} also to prove  total space lower bounds for \RES. For the connections with \cite{BonacinaGT14}, see Appendix \ref{app:k-winning}.

In Section \ref{sec:hall}, we present the proof of our version of the $(2-\epsilon)$-Hall's Lemma (Lemma \ref{lem:Hall}). This proof relies on a concentration result on the average right-degree and a discharging argument.  We also prove a bound for the best possible value  of $\epsilon$ for which Lemma \ref{lem:Hall} could hold and conjecture that this bound is in fact the optimal value of $\epsilon$ (Proposition \ref{thm:epsilon}).  

In Section \ref{sec:covergame}, we define a two player covering game \CovGam, whose aim is to dynamically build a \matching  inside a fixed bipartite graph $G$ (Definition \ref{def:cover-game}). 
Informally, a player, \Choose, queries nodes in the graph $G$ and the other player, \Cover, attempts to extend the current \matching to also cover the node queried (if not already covered).  
The main result of Section \ref{sec:covergame} is Theorem \ref{thm:covergame}, where we prove that if the graph $G$ has large left-expansion (i.e. large enough to apply Lemma \ref{lem:Hall} to sufficiently large subgraphs of $G$), then there is a winning strategy for \Cover to force \Choose to query a very large portion of the graph $G$.
In the analysis of the game, we use the $(2-\epsilon)$-Hall's Lemma and  \matchings in a similar manner to how matchings and $2$-matchings were used in \RES and \PCR\cite{Ben-SassonG03,Atserias04,BonacinaGT14,bg15}. 
A key difference is that  we are looking for winning strategies of \Cover for the \CovGam only on graphs $G$ where the number of high degree vertices is suitably bounded (Theorem \ref{thm:covergame}). This additional information allows us to identify a \matching covering all such high degree vertices in $G$ but preserving expansion properties of the remaining graph. \Cover will use this additional information to obtain a winning strategy. The full proofs of the technical Lemmas of this section are in Appendix \ref{app:lemmas}.

In Section \ref{sec:spacelowerbounds}, we prove (Lemma \ref{lem:cover-to-strategy}), that if \Cover wins \textsf{CoverGame} on the adjacency graph of a CNF $\phi$  (see Section 2 for the definition of adjacency graph) guaranteeing \matchings of maximal size $\mu$, then there exists a \WS{\mu} for the polynomial encoding of $\phi$. 
Finally, the monomial space in \PCR and the total space in \RES for random $3$-CNFs (Theorem \ref{cor:rand}) follow from well-known results about expansion of its adjacency graph \cite{ChvatalS88,BeameP96,Ben-SassonW01,Ben-SassonG03}. 
In order to get optimal lower bounds, we show in Lemma \ref{lem:big-degree} (with proof in Appendix \ref{app:high-degree}) that the number of variables appearing in many clauses of a random CNF is w.h.p. suitably bounded as required in the conditions of Theorem \ref{thm:covergame}. 

\section{Preliminaries}\label{sec:prelimin}

Let $X$ be a set of variables. A \dfn{literal} is a boolean constant, $0$ or $1$, or a variable $x\in X$, or the negation $\neg x$ of a variable $x$. 
A \dfn{clause} is a disjunction of literals: $C = (\ell_1 \vee \ldots \vee \ell_k)$. The \dfn{width} of a clause is the number of literals in it. 
A formula $\phi$ is in \dfn{Conjunctive Normal Form} (CNF) if $\phi = C_1 \wedge \ldots  \wedge C_m$ where $C_i$ are clauses. It is a $k$-CNF if each $C_i$ contains at most $k$ literals.
Let $\phi$ be a CNF and $X$ be the set of variables appearing in $\phi$. 
The \dfn{adjacency graph of $\phi$} is a bipartite graph $G_\phi$ with bipartition $(L,R)$ such that $L$ is the set of clauses of $\phi$,  $R=X$, and $(C,x)\in E$ if and only if $x$ or $\neg x$ appears in $C$.  If $\phi$ is a $k$-CNF, then $G_\phi$ has left-degree at most $k$.

\dfn{Resolution} (\RES) \cite{Blake37,Robinson:1965} is a propositional proof system for refuting unsatisfiable CNFs. 
Starting from an unsatisfiable CNF $\phi$, \RES allows us to derive the empty clause $\bot$ using the following inference rule:
$$
\frac{C\vee x \quad D\vee \neg x}{C\vee D}.
$$

\smallskip

Following \cite{AlekhnovichBRW02}, we define $\overline{X}= \{\bar{x}\ :\ x \in X\}$, which we regard as a set of formal variables with the intended meaning of $\bar x$ as $\neg x$. 
Given a field $\mathbb{F}$,  the ring $\mathbb{F}[X,\overline{X}]$ is the ring of polynomials in the variables $X\cup \overline{X}$ with coefficients in $\mathbb{F}$. 
We use the following \dfn{standard encoding ($tr$)} of CNF formulas over $X$ into a set of polynomials in $\mathbb{F}[X,\overline X] $: 
$tr(\phi)= \{tr(C)\ :\ C\in \phi\}\cup\{x^2-x, x+\bar x-1 \st x\in X\}$, where
$$
tr(x) = \bar x, \qquad  tr(\neg x) = x ,\qquad  tr( \bigvee_{i=1}^{n} \ell_i ) = \prod_{i=1}^n tr(\ell_i).
$$
A set of polynomials $P$ in $\mathbb{F}[X]$ is \dfn{contradictory} if and only if  $1$ is in the ideal generated by $P$. 
Notice that a CNF $\phi$ is unsatisfiable if and only if $tr(\phi)$ is a contradictory set of polynomials.

\dfn{Polynomial Calculus with Resolution} (\PCR) \cite{AlekhnovichBRW02} is an algebraic proof system for polynomials in $\mathbb{F}[X,\overline{X}]$. 
Starting from an initial set of contradictory polynomials $P$  in $\mathbb{F}[X,\overline{X}]$, \PCR allows us to derive the polynomial $1$ using the following inference rules: for all $p, q\in \mathbb{F}[X,\overline{X}]$
$$
\frac{p \quad \quad q }{\alpha p+\beta q}\ \forall\alpha,\beta \in \mathbb{F}, 
\qquad\qquad\qquad 
\frac{\quad p \quad}{vp} \forall v \in X\cup \overline{X}.
$$

To force 0/1 solutions, we always include the \dfn{boolean axioms} $\{x^2-x, x +\overline{x}-1\}_{x \in X}$ among the initial polynomials, as in the case of the polynomial encoding of CNFs. 

\smallskip

In order to study space of proofs we follow a model inspired by the definition of space complexity for Turing machines, where a machine is given a read-only input tape from which it can download parts of the input to the working memory as needed \cite{EstebanT01}. 

Given an unsatisfiable CNF formula $\phi$, a \dfn{\RES (resp. \PCR) refutation} of $\phi$ is a sequence  $\Pi=\langle\memory{0},\ldots,\memory{\ell}\rangle$ of sets of clauses (resp. polynomials), called \dfn{memory configurations}, such that: $\memory{0} =\emptyset$,  $\bot\in \memory{\ell}$ (resp. $1\in \memory{\ell}$), and for all $i\leq \ell$, $\memory{i}$ is obtained by $\memory{i-1}$ by applying one of the following rules:

\begin{quote}
\textsc{(Axiom Download)} $\memory{i} = \memory{i-1} \cup \{C\}$, where $C$ is a clause of $\phi$ (resp. a polynomial of $tr(\phi)$);\\
\textsc{(Inference Adding)} $\memory{i} = \memory{i-1} \cup \{O\}$, where $O$ is inferred by the \RES inference rule (resp. \PCR inference rules) from clauses (resp. polynomials) in \memory{i};\\
\textsc{(Erasure)} $\memory{i} \subset \memory{i-1}$.
\end{quote}

If in the definition of \PCR refutation we substitute the \textsc{Inference Adding} rule with:
\begin{quote}
 \textsc{(Semantical Inference)} $\memory{i}$ is contained in the  ideal generated by $\memory{i-1}$ in $\mathbb{F}[X,\overline X]$,
\end{quote}
we have what is called a \dfn{semantical \PCR refutation} of  $\phi$ \cite{AlekhnovichBRW02}.

The \dfn{total space} of $\Pi$ is the maximum over $i$ of the number of variables (counted with repetitions) occurring in \memory{i}.

The \dfn{monomial space} of a \PCR refutation $\Pi$, denoted by $\Space(\Pi)$, is the maximum over $i$ of the number of {\em distinct} monomials appearing in \memory{i}.

\subsection{Space lower bounds and \WSs{\boldsymbol{k}}}

A \dfn{partial assignment} over a set of variables $X$ is a map $\alpha:X\longrightarrow \{0,1,\star\}$. 
The \dfn{domain} of $\alpha$ is $\dom(\alpha)=\alpha^{-1}(\{0,1\})$. 
Given a partial assignment $\alpha$ and a CNF $\phi$ we can apply $\alpha$ to $\phi$, obtaining a new formula $\alpha(\phi)$ in the standard way, i.e. substituting each variable $x$ of $\phi$ in $\dom(\alpha)$ with the value $\alpha(x)$ and then simplifying the result. 
We say that $\alpha$ {\em satisfies} $\phi$, and we write  $\alpha\models \phi$, if $\alpha(\phi)=1$. 
Similarly, for a family $F$ of partial assignments, $F\models \phi$ means that for each $\alpha\in F$, $\alpha\models \phi$.

For each partial assignment $\alpha$ over $X \cup \overline{X}$ we assume that it respects the intended meaning of the variables; that is, $\alpha(\bar x)=1-\alpha(x)$ for each $x,\bar x \in \dom(\alpha)$. 
Given a partial assignment $\alpha$ and a polynomial $p$ in $\mathbb{F}[X,\overline X]$, we can apply $\alpha$ to $p$, obtaining a new polynomial $\alpha(p)$ in the standard way, similarly as before.
The notation $\alpha\models p$ means that $\alpha(p)=0$.
If $F$ is a family of partial assignments and $P$ a set of polynomials, we write $F\models P$ if $\alpha\models p$ for each $\alpha\in F$ and $p\in P$. 
Notice that if $\phi$ is a CNF and $\alpha$ is a partial assignment then $\alpha\models\phi$ if and only if $\alpha\models tr(\phi)$.

Let $A$ be a family of partial assignments, and let $\dom(A)$ be the union of the domains of the assignments in $A$. 
We say that a set of partial assignments $A$  is \dfn{flippable}  if and only if  for all $x \in \dom(A)$ there exist $\alpha,\beta \in A$ such that $\alpha(x)=1-\beta(x)$.
Two families of partial assignments $A$ and $A'$ are \dfn{domain-disjoint} if $\dom(\alpha)$ and $\dom(\alpha')$ are disjoint for all $\alpha\in A$ and $\alpha'\in A'$.
Given non-empty and pairwise domain-disjoint sets of assignments\footnote{We always suppose that the partial assignments are respecting the intended meaning of the variables in $\overline X$.  That is, if $x\in \dom(\alpha)$, then $\alpha(\bar x)=1-\alpha(x)$; hence a variable $x$ is in $\dom(H_i)$ if and only if $\bar x$ is in $\dom(H_i)$.} $\xvec{H}{1}{t}$, the \dfn{product-family} $\HH=H_1\otimes \ldots \otimes H_t$ is the following set of assignments
$$
\HH=H_1\otimes \ldots \otimes H_t= \{\alpha_1\cup\ldots \cup \alpha_t \st \alpha_i\in H_i\},
$$
or, if $t=0$, $\HH=\{\Star\}$, where $\Star$ is the partial assignment of the empty domain.
Note $\dom(\HH)=\bigcup_{i}\dom(H_i)$.  We call the $H_i$ the \dfn{factors} of \HH.
For a product-family $\HH= H_1\otimes \ldots \otimes H_t$, the \dfn{rank} of \HH, denoted $\|\HH\|$, is the number of factors of $\HH$ different from $\{\Star\}$. We do not count $\{\Star\}$ in the rank since $\HH\otimes \{\Star\}=\HH$. 
Given two product-families \HH and $\HH'$, we write $\HH'\sqsubseteq \HH$ if and only if each factor of $\HH'$ different from $\{\Star\}$ is also a factor of \HH. In particular, $\{\Star\}\sqsubseteq \HH$ for every \HH.

A family of \FPFs is called a \dfn{strategy} and denoted by \LL.  
We now present a definition of suitable families of flippable products: the {\em \WSs{k}} \cite{bg15}
.

\begin{definition}[\WS{k}\cite{bg15}]
\label{def:kex}
Let $P$ be a set of polynomials in the ring $\mathbb{F}[X,\overline X]$. A non-empty strategy \LL is a \dfn{\WS{\boldsymbol{k}}} if and only if for every $\HH \in  \LL$ the following conditions hold:
 \begin{description}
 \item[(restriction)] for each $\HH' \sqsubseteq \HH$, $\HH' \in \LL$;
 \item[(extension)] if  $\|\HH\|<k$, then for each $p \in P$  there exists a \FPF $\HH'\in \LL$ such that $\HH'\sqsupseteq \HH$ and $\HH' \models p$.
 \end{description}
\end{definition}
Notice that, by the restriction property, $\{\Star\}$ is in every \WS{k}.

\begin{theorem} \label{thm:lowerbound}
Let $\phi$ be an unsatisfiable CNF and $k\geq 1$ an integer.  If there exists a non-empty \WS{k}  \LL for $tr(\phi)$, then for every semantical \PCR refutation $\Pi$ of $\phi$, $\Space(\Pi) \geq \frac{k}{4}$. Moreover any resolution refutation of $\phi$ must pass through a memory
configuration containing at least $\frac{k-1}{2}$ clauses each of width at least $\frac{k-1}{2}$.
In particular, the \RES refutation requires total space at least $\frac{(k-1)^2}{4}$.
\end{theorem}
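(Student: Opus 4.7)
The plan is an adversary argument. Given a (semantical) refutation $\Pi=\langle \memory{0},\ldots,\memory{\ell}\rangle$ of $\phi$, we maintain an auxiliary flippable product-family $\calH_i\in\LL$ such that every polynomial (resp.\ clause) of $\memory{i}$ is killed by every assignment in $\calH_i$, i.e.\ $\calH_i\models \memory{i}$. We start with $\calH_0=\{\Star\}\in\LL$ (present by the restriction property applied to any member of $\LL$), which trivially models the empty memory. Since the final $\memory{\ell}$ contains $1$ (resp.\ $\bot$) and no partial assignment can satisfy $\alpha(1)=0$, a scheme maintaining the invariant all the way to $\memory{\ell}$ yields a contradiction; the only way the scheme can fail is at an axiom-download step when $\|\calH_{i-1}\|\ge k$, thereby forcing the rank --- and, via the rank-to-space bound below, the space --- to be large at some intermediate configuration.

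The three transitions are handled as follows. For a semantical inference step, $\memory{i}$ lies in the ideal of $\memory{i-1}$, so any $\alpha\in\calH_{i-1}$ killing $\memory{i-1}$ also kills $\memory{i}$, and we set $\calH_i=\calH_{i-1}$. For an erasure, we take $\calH_i$ to be a suitable restriction of $\calH_{i-1}$ retaining only the factors still needed by the surviving polynomials. For an axiom download of $p$, assuming $\|\calH_{i-1}\|<k$, we invoke the extension property to get $\calH_i\sqsupseteq\calH_{i-1}$ with $\calH_i\models p$; domain-disjointness of the factors ensures $\calH_i\models \memory{i-1}$ as well, since adding fresh factors cannot alter the value on variables already covered.

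The technical engine is a \emph{shrinking lemma}: if a flippable product-family $\calH$ models a polynomial $p$, then $\calH$ admits a restriction $\calH'\sqsubseteq\calH$ whose rank is at most $4$ times the number of distinct monomials of $p$ and with $\calH'\models p$. Applying this factor-by-factor to each polynomial of $\memory{i}$ and taking the union of the resulting factor sets gives $\|\calH_i\|\le 4\cdot\Space(\memory{i})$. Consequently, if $\Space(\Pi)<k/4$, the extension property is always applicable, the invariant is preserved throughout, and we obtain the contradiction $\calH_\ell\models 1$. The \RES bound follows the same scheme with clauses playing the role of polynomials: using a clause-version of the shrinking lemma, one checks that if no $\memory{i}$ contains $(k-1)/2$ clauses of width at least $(k-1)/2$, then $\|\calH_i\|$ stays below $k$. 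The same contradiction then forces the existence of a configuration with $\ge (k-1)/2$ clauses each of width $\ge (k-1)/2$, whose total space is at least $(k-1)^2/4$.

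The main obstacle is the shrinking lemma itself. Flippability is essential: by successively flipping variables inside a single factor and subtracting the resulting evaluations of $\calH\models p$, one isolates, for each pair of monomials of $p$ that could in principle cancel, the few factors genuinely responsible for the cancellation. Handling the interplay between variables in $X$ and their formal negations in $\overline X$, and accounting for multilinearisation imposed by the boolean axioms, is where the delicate combinatorial bookkeeping lives; once the shrinking lemma is established, the rest of the argument above is routine.
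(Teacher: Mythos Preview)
Your \PCR argument is essentially the approach of \cite{bg15}, which is exactly what the paper invokes as a black box for the monomial-space half of the theorem, so that part is fine. One small slip: applying the shrinking lemma polynomial-by-polynomial and taking the union of factors bounds $\|\calH_i\|$ by $4$ times the total number of monomial occurrences in $\memory{i}$, which can exceed $4\cdot\Space(\memory{i})$ when monomials are shared; the locality lemma should be applied once to the whole configuration.

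For the \RES half, however, your route diverges from the paper's and has a genuine gap. The paper does not run the adversary directly on clauses; it shows (Appendix~\ref{app:k-winning}) that a \WS{k} for $tr(\phi)$ induces a $(k{-}1)$-free family of piecewise assignments in the sense of \cite{BonacinaGT14}, and then invokes their theorem, whose own proof is where the width statement actually comes from. Your proposed ``clause-version of the shrinking lemma'' cannot deliver that statement. For a single clause $C$, if a flippable product-family $\calH$ satisfies $C$ then already \emph{one} factor of $\calH$ suffices: otherwise, choosing in each factor an assignment that satisfies no literal of $C$ and taking their union would witness $\calH\not\models C$. So the best clause-shrinking yields $\|\calH_i\|\le |\memory{i}|$, i.e.\ a clause-space bound, not ``$(k{-}1)/2$ clauses each of width $(k{-}1)/2$''. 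Concretely, a memory configuration of $k$ pairwise variable-disjoint width-$2$ clauses forces rank $k$ while containing no wide clause at all, so your claimed implication ``no $\memory{i}$ has $(k{-}1)/2$ wide clauses $\Rightarrow$ $\|\calH_i\|<k$'' is false. Obtaining the width bound really uses the additional structure of the $r$-free framework---in particular the single-variable extension property and the counting argument of \cite{BonacinaGT14}---which is precisely what the paper's reduction supplies and what your sketch omits.
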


The monomial space lower bound follows directly from the main theorem of \cite{bg15}. In Appendix \ref{app:k-winning} we show how to use \WSs{k} to construct the combinatorial objects used in  \cite{BonacinaGT14} to obtain total space lower bounds.

\section{A $\boldsymbol{(2-\epsilon)}$-Hall's Lemma for \matchings}\label{sec:hall}

We now prove our variant of Hall's Lemma. 
This lemma may be of independent interest. 

\restatetheorem{\restateHall*}
 \noindent\textit{Proof.}
Observe that each vertex $v$ in $L$ has degree 2 or 3, otherwise $v$ could not be covered by a \matching.  Similarly, no degree 2 vertices in $L$ have the same neighbourhood. 
 
\begin{wrapfigure}[21]{l}{.26\textwidth}
\includegraphics[scale=.8]{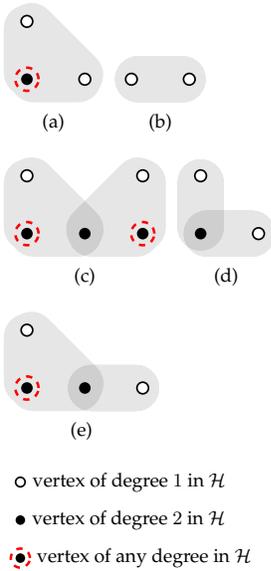}
\caption{A set of reducible configurations for $\mathcal{H}$}
\label{fig:reductions}
\end{wrapfigure}
By assumption, no degree 3 vertices have the same neighbourhood.  
Define the hypergraph $\calH=(V, E)$ with $V=N_G(L)$ and $E=\{N_G(x): x \in L\}$. 

By the above observations, $N_G: L \rightarrow E$ is a bijection and $|V|\geq (2-\epsilon)|L|=(2-\epsilon)|E|$.  The degree of a vertex $v$ in $\calH$, denoted $\deg_{\calH}(v)$, is the number of distinct hyperedges which contain $v$.
Let $L' \subseteq L$ and let $E' = \{N_G(x): x \in L'\}$.  
The existence of a \matching in $G$ covering $L'\subseteq L$ is equivalent to the existence of an injective function $f:E' \rightarrow \{\{x, y\}: x, y \in N_G(L')\}$, which we call a \dfn{$\boldsymbol{2}$-path cover} of $E'$,  such that
\begin{enumerate}
\item for every $e \in E'$,  $f(e)$ is a subset of size $2$ of $e$;
\item for each triple of distinct hyperedges $e_1,e_2,e_3\in E'$, it is not the case that $f(e_{i})\cap f(e_{i+1})\neq \emptyset$ for $i=1,2$.
\end{enumerate}

Observe that all the configurations of hyperedges shown in Figure \ref{fig:reductions} have a 2-path cover using only degree $1$ and $2$ vertices of \calH.
If any of these configurations appear in \calH, we can by assumption find a 2-path cover $f$ of the remaining hyperedges, and then extend $f$ to a 2-path cover of \calH.
Therefore, we may assume that no configuration from Figure \ref{fig:reductions} appears in \calH and we show that this assumption leads to a contradiction.

Let $d$ be the average degree in \calH. Observe that
$
3|E|\geq \sum_{v \in V} \deg_{\calH}(v)=d |V| \geq d(2-\epsilon)|E|,
$
where the last inequality follows from the hypothesis that $|V|\geq (2-\epsilon)|E|$.  Thus, $d \leq \frac{3}{2-\epsilon}$.

Let $D$ be the set of vertices of \calH of degree 1, and \calD be the set of hyperedges which contain a vertex in $D$.  
Note that $|D|=|\calD|$, since the configurations $(a)$, $(b)$ of Figure \ref{fig:reductions} do not appear in \calH.  We now prove a concentration result for $|D|$.
An upper bound follows immediately from the following chain of inequalities
$$
|D|=|\calD|\leq |E|\leq \frac{1}{2-\epsilon}|V|.
$$
For the lower bound suppose $\frac{1-2\epsilon}{2-\epsilon}|V| > |D|$.  Then,
\begin{align*}
\frac{3}{2-\epsilon} |V|  \geq d|V| = \sum_{v\in D }\deg_\calH(v) +\sum_{v\in V\setminus D}\deg_\calH(v) \\
\geq |D|+2|V\setminus D|> \frac{1-2\epsilon}{2-\epsilon} |V| + 2 (1-\frac{1-2\epsilon}{2-\epsilon}) |V|=\frac{3}{2-\epsilon} |V|,
\end{align*}
which is a contradiction. Hence we have that $\frac{1-2\epsilon}{2-\epsilon}|V| \leq |D| \leq \frac{1}{2-\epsilon}|V|$.

We finish the proof with a discharging argument.  Each vertex $v$ of \calH receives a charge of $\deg_{\calH}(v)$.  Let $E_2$ be the set of hyperedges in $E$ of size $2$ and $\calD_2$ be $\calD\cap E_2$.  The edges in 
$\calD_2$ receive a charge of -2, the edges in $\calD\setminus \calD_2$ receive charge $-3$.  The edges not in \calD receive no charge.

We now perform the following discharging rule.  Each
hyperedge $e$ in \calD gives a charge of -1 to each vertex in $e$.  After discharging, every hyperedge has charge 0, and every vertex has non-negative charge.
Let $Z$ be the set of vertices with charge 0 after discharging.  Observe that a vertex $x$ is in $Z$ if and only if every hyperedge containing $x$ also contains a degree 1 vertex.
 
Let $C$ denote the total charge.  Then,
\begin{align*}
C
=
 3|E|-|E_2|-3|\calD|+|\calD_2|
\leq
 3|E|-3|D| 
\leq
 \frac{3}{2-\epsilon} |V|-3\frac{1-2\epsilon}{2-\epsilon} |V|
=
\frac{6\epsilon}{2-\epsilon}|V|.
\end{align*}
It follows that $|Z| \geq \frac{2-7\epsilon}{2-\epsilon} |V|$.  Since $D \subseteq Z$ and $|D| \leq \frac{|V|}{2-\epsilon}$, we conclude that $|Z \setminus D| \geq \frac{1-7\epsilon}{2-\epsilon} |V|$.
Because  the  configurations $(c),(d),(e)$ of Figure \ref{fig:reductions} do not appear in \calH, every vertex in $Z \setminus D$ has degree at least 3.  Thus,
\begin{align*}
\frac{3}{2-\epsilon} |V|   &\geq d|V| \geq \sum_{v \in D} \deg_{\calH} (v)+ \sum_{v \in Z \setminus D} \deg_{\calH} (v) \geq |D| + 3|Z \setminus D| \\&\geq \frac{1-2\epsilon}{2-\epsilon}|V| + 3\frac{1-7\epsilon}{2-\epsilon} |V|  = \frac{4-23\epsilon}{2-\epsilon}|V|.
\end{align*}

Therefore, $3 \geq 4-23\epsilon$, which is a contradiction as $\epsilon < \frac{1}{23}$.\qed

\medskip
We end this section with a comment on the parameter $\epsilon$.  Since \matchings expand by a factor of at least $\frac{3}{2}$, we certainly require $\epsilon \leq \frac{1}{2}$ in the statement
of Lemma \ref{lem:Hall}.  In Proposition \ref{thm:epsilon}, we show a stronger upper bound that  $\epsilon \leq \frac{1}{3}$ is in fact necessary (see Appendix \ref{appendix:epsilon} for the proof).

\begin{proposition}\label{thm:epsilon}
For all $\epsilon>\frac{1}{3}$ there exists a bipartite graph $G_\epsilon$ with bipartition $(L,R)$ such that each vertex in $L$ has degree at most $3$ and no pair of degree $3$ vertices in $L$ have the same set of neighbours. Moreover, $|N_{G_\epsilon}(L)|\geq (2-\epsilon)|L|$ and each proper subset of $L$ can be covered by a \matching but $L$ cannot be covered by a \matching. 
\end{proposition}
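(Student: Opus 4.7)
\noindent\textit{Proof Proposal.}

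The plan is to construct, for each $\epsilon>\frac{1}{3}$, an explicit bipartite graph $G_\epsilon$ meeting the degree and expansion hypotheses of Lemma~\ref{lem:Hall} but violating its conclusion. The value $\frac{1}{3}$ arises as the \emph{counting threshold}: a \matching covering $S\subseteq L$ uses $a$ paths with $4$ edges and $b$ paths with $2$ edges, where $2a+b=|S|$, consuming exactly $3a+2b=2|S|-a$ right-vertices. The right-vertex budget ranges between $\frac{3}{2}|S|$ (all $4$-paths) and $2|S|$ (all $2$-paths), and the midpoint $\frac{5}{3}|S|$ corresponds to a forced half-and-half split; a tight counterexample should exploit exactly this rigidity.

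The base gadget is $K_{3,3}$ minus a perfect matching: $L_0=\{l_1,l_2,l_3\}$ each of degree $2$, with $l_i\sim R_0\setminus\{r_i\}$ and $R_0=\{r_1,r_2,r_3\}$. Covering $L_0$ would require at least one $4$-path plus one $2$-path, for a total of $5$ right-vertices, while $|R_0|=3$; so $L_0$ is uncoverable, yet every pair $\{l_i,l_j\}$ is covered by the $4$-path through their shared right-vertex. Expansion here is only $1$, so this gadget alone witnesses the proposition merely for $\epsilon\geq 1$.

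To drive the expansion toward $\frac{5}{3}$ I would augment the base by adding fresh right-vertices and edges in a controlled way, preserving degree at most $3$ on the left, distinct degree-$3$ neighborhoods, the non-coverability of $L$, and the coverability of every proper subset. A first simple augmentation attaches a single new right-vertex $s$ adjacent to all of $l_1,l_2,l_3$: here $|R|=4$ still falls one short of the minimum $5$ needed to cover $L$ (by the same counting as before), whereas every pair now admits two possible middles ($r_k$ or $s$), keeping proper subsets coverable. This yields expansion $\frac{4}{3}$ and witnesses the proposition for $\epsilon>\frac{2}{3}$. To approach $\epsilon=\frac{1}{3}$, the plan is to iterate by taking many copies of the augmented gadget and gluing them along a sparse backbone of shared right-vertices so that the counting obstruction continues to hold globally (forcing $\frac{5}{3}|L|$ right-vertices while providing only $\frac{5}{3}|L|-1$).

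The main obstacle is verifying coverability of every proper subset $L\setminus\{l^*\}$ in the iterated construction: removing $l^*$ in one copy must free just enough right-vertices to propagate through the shared backbone so that each remaining copy can still be covered by a \matching. This requires a case analysis indexed by the copy containing $l^*$ and by the local structure of the backbone around it, and is the technical heart of the argument. Once this is in place, a direct count yields $|N_{G_\epsilon}(L)|\ge(2-\epsilon)|L|$ for every $\epsilon>\frac{1}{3}$, establishing the proposition.
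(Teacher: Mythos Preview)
Your counting argument cannot reach $\epsilon>\frac{1}{3}$. From $2a+b=|L|$ you get $3a+2b=2|L|-a$, whose \emph{minimum} over admissible $(a,b)$ is $\lceil \frac{3}{2}|L|\rceil$ (all $4$-paths), not $\frac{5}{3}|L|$. Your augmented gadget works only because $\lceil\frac{3}{2}\cdot 3\rceil=5>4$; once you iterate to large $|L|$ with $|R|\approx\frac{5}{3}|L|>\frac{3}{2}|L|$, counting alone no longer rules out a cover. So ``gluing copies along a backbone so that the counting obstruction continues to hold globally'' is the wrong mechanism: a global counting obstruction can only witness the proposition for $\epsilon>\frac{1}{2}$.

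What is missing is a \emph{structural} forcing device. The paper builds a gadget $\mathcal{G}$ (with $10$ vertices and $6$ hyperedges, so local ratio exactly $\frac{5}{3}$) having a distinguished vertex $x$ such that every $2$-path cover of $\mathcal{G}$ must use $x$, while every proper subset of $E(\mathcal{G})$ admits a $2$-path cover avoiding $x$. Starting from a small base $\mathcal{H}$ (with $|V|/|E|=6/4$) that is not $2$-path coverable but all of whose proper edge-subsets are, one replaces a pendant size-$2$ edge $\{x,y\}$ by a copy of $\mathcal{G}$ glued at $x$; the forcing property of $\mathcal{G}$ (not a count) is what propagates non-coverability, and the ``avoiding $x$'' property is what propagates coverability of proper subsets. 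Each iteration adds $10$ vertices and $6$ edges, so $\frac{6+10n}{4+6n}\to\frac{5}{3}$, yielding the result for every $\epsilon>\frac{1}{3}$. Your proposal would be repaired by replacing the counting-based amplification with a gadget of this kind.
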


\noindent We conjecture that Lemma \ref{lem:Hall} is true for $\epsilon \leq \frac{1}{3}$.  Proposition \ref{thm:epsilon} shows that this would be best possible.

\section{A cover game over bipartite graphs}
\label{sec:covergame}

As an application, we use the previous result to build a winning strategy for a game played on bipartite graphs.

\begin{definition}[Cover Game]  \label{def:cover-game}
The \dfn{Cover Game \CoverGame{G,\mu}} is a game between two players, 
\Choose and \Cover, on a bipartite graph $G$ with bipartition $(L,R)$. 
At each step $i$ of the game the players maintain a \matching $F_i$ in $G$. 
At step $i+1$ \Choose can 
\begin{enumerate}
\item remove a connected component from $F_i$, or
\item if the number of connected components of $F_i$ is strictly less than $\mu$, pick a vertex (either in $L$ or $R$) and challenge \Cover to find a \matching $F_{i+1}$ in $G$
such that
\begin{enumerate}
\item $F_{i+1}$ extends $F_i$.  That is, each connected component of $F_i$ is also a connected component of $F_{i+1}$;
\item $F_{i+1}$ covers the vertex picked by \Choose.
\end{enumerate}
\end{enumerate}
\Cover  loses the game \CoverGame{G,\mu} if at some point she cannot answer a challenge by \Choose.  Otherwise, \Cover wins.  
\end{definition}

\begin{definition}[\expander{s,\delta}] Let $s$ be a positive integer and $\delta$ be a positive real number. A bipartite graph  $G$ with bipartition $(L,R)$ is an \dfn{\expander{\boldsymbol{s,\delta}}} if all subsets $X \subseteq L$ of size at most $s$ satisfy $|N_{G}(X)| \geq \delta|X|$.
\end{definition}

The next theorem shows that \Cover has a winning strategy for the game \CoverGame{G, \mu} for expander graphs $G$ with appropriately chosen parameters.

\begin{theorem}\label{thm:covergame}
Let $G$ be a bipartite graph with bipartition $(L,R)$, $s,D$ be integers, and $\epsilon <\frac{1}{23}$ be a real number. 
For every integer $d \geq D$ let $S_d\subseteq R$ be the set of vertices of $R$ with degree bigger than $d$. Suppose that 
\begin{enumerate}
\item each vertex in $L$ has degree $3$;
\item $G$ is an \expander{s,2-\frac{\epsilon}{2}};
\item for every $d\geq D$, $\frac{72d}{\epsilon}(|S_d|+d)+1 \leq \frac{s}{2}$.
\end{enumerate}
Then \Cover wins the cover game \CoverGame{G,\mu} with 
$\mu=\frac{\epsilon s}{144 D}$.
\end{theorem}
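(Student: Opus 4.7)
The plan is to exhibit an explicit strategy for \Cover using a Hall-closure operation in residual bipartite graphs, and verify inductively that every challenge by \Choose can be answered within the $\mu$-component budget. The guiding invariant is that at step $i$ the residual graph $G_i:=G\setminus\bigl(L(F_i)\cup R(F_i)\bigr)$ still satisfies the $(2-\epsilon)$-expansion on small enough subsets of $L\setminus L(F_i)$, which is the precise condition needed to apply Lemma~\ref{lem:Hall}.

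Two preliminary observations. Since $G$ is an \expander{s,2-\frac{\epsilon}{2}} and $\epsilon<\tfrac{1}{23}$, any two degree-$3$ vertices of $L$ with identical neighborhoods would form a pair $X$ with $|N_G(X)|=3<4-\epsilon\le(2-\tfrac{\epsilon}{2})|X|$, contradicting expansion; so the no-repeated-neighborhood hypothesis of Lemma~\ref{lem:Hall} is automatic on every subgraph inherited from $G$. Also, condition~3 with $d=D$ yields $|S_D|\le\mu$, and for each $d\ge D$ the bound $|S_d|\le\tfrac{\epsilon s}{144 d}$, keeping the heavy-degree tail in $R$ very thin.

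\Cover's response to a challenge on vertex $v$ is then given in three cases. If $v$ is already covered by $F_i$, do nothing. If $v\in R\setminus R(F_i)$, adjoin $v$ as an isolated $R$-component. If $v\in L\setminus L(F_i)$, apply a Hall-closure argument in $G_i$ to find a set $L^{\star}\subseteq L\setminus L(F_i)$ with $v\in L^{\star}$ such that every subset of $L^{\star}$ satisfies the $(2-\epsilon)$-Hall condition in $G_i$; then Lemma~\ref{lem:Hall} applied to $G_i[L^{\star}\cup N_{G_i}(L^{\star})]$ yields a \matching $M^{\star}$ covering $L^{\star}$, and I set $F_{i+1}:=F_i\cup M^{\star}$. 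Covering the whole closure at once, rather than just $v$, is what prevents future forcing moves by \Choose on nearby $L$-vertices.

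The crux is to bound the number of new components produced by $M^{\star}$ so that $F_{i+1}$ still has at most $\mu$ components. Using $|L(F_i)|\le 2\mu$ and $|R(F_i)|\le 3\mu$, the expansion deficit $|N_G(X)|-|N_{G_i}(X)|$ for $X\subseteq L\setminus L(F_i)$ is at most the total $G$-degree of the $R$-vertices already used by \Cover. By arranging \Cover's strategy so that no path of $F_i$ routes through a vertex of $G$-degree exceeding $D$---allowing each vertex of $S_D$ into $R(F_i)$ only as an isolated component, of which there can be at most $|S_D|\le\mu$---the cumulative degree loss is bounded by $3\mu D+\sum_{d\ge D}|S_d|\cdot d$, which condition~3 caps at order $\epsilon s$. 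Combined with the $(2-\tfrac{\epsilon}{2})$-expansion of $G$ on sets of size at most $s$, a short closure-size calculation shows the iteration terminates with $|L^{\star}|$ small enough to cost only the remaining component budget. The principal obstacle I anticipate is precisely this final bookkeeping---balancing the $\tfrac{\epsilon}{2}|X|$ expansion slack against the right-degree loss from heavy vertices---and the exact form of condition~3, $\tfrac{72 d}{\epsilon}(|S_d|+d)+1\le s/2$, is evidently engineered to make the constants align; the detailed case analysis on violating sets is likely packaged into the auxiliary lemmas in Appendix~\ref{app:lemmas}.
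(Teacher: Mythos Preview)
Your strategy for $R$-challenges is fatally broken. You propose that when \Choose picks $v\in R\setminus R(F_i)$, \Cover simply adjoins $v$ as an isolated component. But then \Choose can defeat \Cover in four moves: pick the three $R$-neighbours $v_1,v_2,v_3$ of some $u\in L$ (each becomes a frozen single-vertex component of $F_i$), then pick $u$. Any \matching path covering $u$ must have $u$ as an internal vertex and hence use at least two of $v_1,v_2,v_3$; but the extension rule requires every component of $F_i$ to remain a component of $F_{i+1}$, so the $v_j$ cannot be absorbed into a path through $u$. \Cover loses immediately, regardless of expansion. The paper avoids this by never treating an $R$-challenge lazily: Lemma~\ref{lem:right-vertex-cover} first covers \emph{all $L$-neighbours} of $v$ with \matching components (via repeated applications of Lemma~\ref{lem:left-vertex-cover}), and only then adds $v$ as isolated if it is still uncovered. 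This is also why the paper precomputes, once and for all, a fixed \matching $M$ covering the high-degree set $S_D$ before the game starts, so that the residual graph has right-degree bounded by $D$ and the cost of each subsequent $R$-challenge is controlled.

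Your treatment of $L$-challenges is also structurally different from, and vaguer than, the paper's. You propose covering an entire ``Hall closure'' $L^\star$ at once and then bounding the number of new components produced; the paper instead maintains a single invariant---the \matchingprop of the pair $(A,B)=(L(M)\cup L(F_i),\,R(M)\cup R(F_i))$---and shows via Lemma~\ref{lem:left-vertex-cover} that one can always cover a queried $L$-vertex with a \emph{single} new connected component while preserving the invariant. The closure bookkeeping you anticipate as the ``principal obstacle'' is thereby bypassed entirely; the constants in condition~3 are tuned to the cardinality hypotheses of Lemmas~\ref{lem:left-vertex-cover} and~\ref{lem:right-vertex-cover}, not to a closure-size estimate.
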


The proof of this result is similar to constructions that can be found for example in \cite{Ben-SassonG03,Atserias04, BonacinaGT14}. 

For the rest of this section, fix a bipartite graph $G$ with bipartition $(L,R)$, an integer $s$ and a real number $\epsilon < \frac{1}{23}$ such that $G$ is an \expander{s,2-\frac{\epsilon}{2}} and each vertex in $L$ has degree $3$.
Given $A\subseteq L$ and $B\subseteq R$, we let $G_{A,B}$ be the subgraph of $G$ induced by $(L\cup R) \setminus (A\cup B)$.

\begin{definition}[\matchingprop]
Given two sets $A \subseteq L$ and $B \subseteq R$, we say that the pair $(A,B)$ has the \dfn{\matchingprop}, if for every $C \subseteq L \setminus A$ with
$|C| \leq s$, there exists a \matching $F$ in $G_{A,B}$ covering $C$.
\end{definition}

\begin{restatable}{\retheorem}{restateLemmaSmallC}\label{lem:smallC}
Let $A\subseteq L$ and $B\subseteq R$ be such that the pair $(A,B)$ does not have the \matchingprop.  Then there exists a set $C\subseteq L\setminus A$ with  $|C|< \frac{2}{\epsilon}|B|$, such that no \matching in $G_{A,B}$ covers $C$.
\end{restatable}

\begin{proof}
Take $C\subseteq L\setminus A$ of minimal size such that no \matching in $G_{A,B}$ covers $C$. We have that $|C|\leq s$ and by minimality of $C$ and Lemma \ref{lem:Hall} it follows that 
$$
|N_{G_{A,B}}(C)|< (2-\epsilon)|C|.
$$
But, by hypothesis $G$ is an \expander{s,2-\frac{\epsilon}{2}}; hence $(2-\frac{\epsilon}{2})|C|\leq |N_G(C)|$. Therefore,
$$
(2-\frac{\epsilon}{2})|C|\leq |N_G( C)|\leq |N_{G_{A,B}}(C)|+ |B|<(2-\epsilon)|C|+|B|.
$$
Hence $|C|<\frac{2}{\epsilon} |B|$, as required.
\end{proof}
Lemma \ref{lem:smallC} is the only place where we directly use the $(2-\epsilon)$-Hall's Lemma (Lemma \ref{lem:Hall}) from the previous section. However, Lemma \ref{lem:smallC} itself plays a crucial role in proving the following Lemmas (see Appendix \ref{app:lemmas} for the proofs).

\begin{restatable}{\retheorem}{restateLemmaEmpty} \label{lem:empty}
The pair $(\emptyset,\emptyset)$ has the \matchingprop.
\end{restatable}

\vspace{-1em}
\begin{restatable}[component removal]{\retheorem}{restateLemmaComponentRemoval}
\label{lem:comp-removal}
Let $A\subseteq L$ and $B\subseteq R$ be such that the pair $(A,B)$ has the \matchingprop and $\frac{2}{\epsilon}|B|\leq s$. 
Then for each \matching $F$ contained in the subgraph of $G$ induced by $A\cup B$,
$(A\setminus L(F),B\setminus R(F))$ has the \matchingprop.
\end{restatable}

\vspace{-1em}
\begin{restatable}[covering a vertex in $L$]{\retheorem}{restateLemmaCoveringL} \label{lem:left-vertex-cover}
Let $A\subseteq L$ and $B\subseteq R$ be such that the pair $(A,B)$ 
has the \matchingprop and let $d$ be the maximum degree of a vertex in $R\setminus B$.
If $\frac{24d}{\epsilon}(|B|+3)  + 1 \leq s$, then for each vertex $v$ in $L\setminus A$, there is a \matching $F$ in $G_{A,B}$ covering $v$ and such that $(A \cup L(F), B \cup R(F))$ has the \matchingprop.
\end{restatable}
\vspace{-1em}
\begin{restatable}[covering a vertex in $R$]{\retheorem}{restateLemmaCoveringR} \label{lem:right-vertex-cover}
Let $A\subseteq L$ and $B\subseteq R$ be such that the pair $(A,B)$ 
has the \matchingprop and let $d$ be the maximum degree of a vertex in $R\setminus B$.
If $\frac{24d}{\epsilon}(|B|+3d)  + 1 \leq s$, then for each vertex $v$ in $R\setminus B$, there is a \matching $F$ in $G_{A,B}$ covering $v$ and such that $(A \cup L(F), B \cup R(F))$ has the \matchingprop.
\end{restatable}


\begin{proof}[Proof of Theorem \ref{thm:covergame}]
 
By the hypothesis on $|S_d|$, for each $d\geq D$, we can repeatedly apply Lemma \ref{lem:right-vertex-cover} starting from $(\emptyset,\emptyset)$ to cover vertexes in $R$ of degree larger than $D$. By starting from vertices of $R$ of maximum degree and proceeding in decreasing order until reaching the vertices of degree $D$, we can build a \matching $M$ covering $S_D$ such that $(L(M),R(M))$ has the \matchingprop. 
Moreover, by the choice of $S_D$,  $G_{L(M),R(M)}$ (the subgraph induced by $(L\cup R)\setminus (L(M) \cup R(M))$) has degree at most $D$.
We say that a \matching $F$ is {\em compatible} with $M$ if each connected component of $F$ is either a connected component of $M$ or disjoint from all connected components of $M$.

We describe a winning strategy \LL for \Cover to win \CoverGame{G,\mu}.
Take \LL to be the set of all \matchings $F$ in $G$ compatible with $M$ such that 
\begin{enumerate} 
\item $(L(M)\cup L(F),R(M)\cup R(F))$ has the \matchingprop, and
\item $\frac{2}{\epsilon}|R(M)\cup R(F)|\leq s$.
\end{enumerate}

This family is non-empty since the empty \matching is in \LL.  Moreover, \LL is closed under removing connected components by Lemma \ref{lem:comp-removal}. Suppose now that at step $i+1$ of the game \Choose picks a vertex $v$ in $G_{L(M),R(M)}$ and that $F_i$ has strictly  less than $\mu=\frac{\epsilon s}{144 D}$ components.  
 Then, $(L(M)\cup L(F_i),R(M)\cup R(F_i))$ satisfies the hypotheses of Lemma \ref{lem:left-vertex-cover} and Lemma \ref{lem:right-vertex-cover}:
\begin{align*}\label{eq:boh}
\frac{24D}{\epsilon}(|R(M)\cup R(F_i)|+3D)  + 1 
&\leq 
\frac{24D}{\epsilon}(|R(M)|+|R(F_i)|+3D)  + 1
\\
&\leq 
\frac{24D}{\epsilon}(|R(M)|+3D)  + 1 + \frac{24D}{\epsilon}|R(F_i)|
\\
&\stackrel{(\star)}{\leq}
\frac{24D}{\epsilon}(3|S_D|+3D)  + 1 + \frac{72D}{\epsilon}\mu
\\
&\stackrel{(\star\star)}{\leq}
\frac{s}{2}+\frac{72D}{\epsilon}\mu
=
\frac{s}{2}+\frac{72D}{\epsilon}\frac{\epsilon s}{144 D}
=
s,
\end{align*}
where the inequality $(\star)$ follows from the fact that $|R(F_i)|\leq 3 \mu$ and $|R(M)|\leq 3|S_D|$, where $S_D$ is the set of vertices in $R$ of degree bigger than $D$. The inequality $(\star\star)$ follows by the hypothesis on the size of $S_D$.

Hence, if $v$ is covered by $F_i$ we take $F_{i+1}=F_i$. If $v$ is covered by $M$ we take $F_{i+1}=F_{i}\cup M_v$, where $M_v$ is the connected component of $M$ covering $v$. Otherwise, by  Lemma \ref{lem:left-vertex-cover} and Lemma \ref{lem:right-vertex-cover} applied to $(L(M)\cup L(F_i),R(M)\cup R(F_i))$, there exists a \matching $F_{i+1}$ extending $F_i\cup M$ by a new connected component covering $v$ such that 
$(L(F_{i+1}),R(F_{i+1}))$ has the \matchingprop.
From the previous chain of inequalities, it follows easily that the pair $(L(F_{i+1}),R(F_{i+1}))$ satisfies the cardinality condition $\frac{2}{\epsilon}|R(M)\cup R(F_{i+1})|=\frac{2}{\epsilon}|R(F_{i+1})|\leq s$.
\end{proof}

\section{Space lower bounds for random 3CNFs}\label{sec:spacelowerbounds}

\begin{lemma}
\label{lem:cover-to-strategy}
Let $\phi$ be an unsatisfiable $3$-CNF and $G_\phi$ its adjacency graph. If \Cover wins the cover game \CoverGame{G_\phi,\mu},  then there is a \WS{\mu} \LL for 
$tr(\phi)$.
\end{lemma}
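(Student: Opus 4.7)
The plan is to associate to each VW-matching $F$ reachable under Cover's winning strategy a flippable product-family $\HH_F$, and to set $\LL := \{\HH_F : F \text{ reachable}\}$. Each connected component $K$ of $F$ contributes one factor $H_K$ to $\HH_F$, whose domain is $V(K) \cap R$.

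The factors are designed so that every assignment in $H_K$ satisfies every clause (every $L$-vertex) in $V(K)$. Concretely: if $K$ is an isolated vertex $v \in R$, then $H_K$ consists of the two assignments sending $v$ to $0$ and to $1$; if $K$ is the 2-edge path $v_0 - u - v_1$, then $H_K$ is the set of all $\{0,1\}$-assignments to $\{v_0, v_1\}$ that satisfy the sub-disjunction of $u$ consisting of its two literals in $v_0$ and $v_1$; if $K$ is the 4-edge path $v_0 - u_1 - v_1 - u_2 - v_2$, then $H_K$ consists of all $\{0,1\}$-assignments to $\{v_0, v_1, v_2\}$ that satisfy both of the two-literal sub-disjunctions (one for each of $u_1, u_2$). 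Because $\phi$ is a 3-CNF, every clause appearing in a non-isolated component has at least two of its (at most three) variables inside $K$, so forcing the two-literal sub-disjunction satisfies the whole clause regardless of any third variable. A short case analysis verifies that each $H_K$ is non-empty and flippable; the only delicate case is the 4-edge path, where one must check that both sub-disjunctions can be satisfied simultaneously and that each of $v_0, v_1, v_2$ attains both Boolean values across the factor, regardless of whether $u_1$ and $u_2$ share or oppose polarity at the middle variable $v_1$.

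The empty matching is reachable and yields $\HH_\emptyset = \{\Star\}$, so $\LL$ is non-empty. Restriction holds because any $\HH' \sqsubseteq \HH_F$ is of the form $\HH_{F'}$ for a sub-matching $F'$ obtained by discarding some components of $F$, and such $F'$ is reachable from $F$ by Choose's component-removal moves. For extension, suppose $\HH_F \in \LL$ with $\|\HH_F\| < \mu$ and fix $p \in tr(\phi)$. Since $\|\HH_F\|$ equals the number of components of $F$, Choose is permitted to issue a challenge. The polynomial $p$ has a natural associated vertex $w$ of $G_\phi$: the variable $x$ if $p$ is a Boolean axiom $x^2 - x$ or $x + \bar x - 1$, and the clause-vertex $C$ if $p = tr(C)$. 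If $w \in V(F)$, then $\HH_F \models p$ by construction and we take $\HH' := \HH_F$. Otherwise, Choose challenges $w$ and Cover's winning strategy returns a reachable $F' \supseteq F$ covering $w$; in the Boolean-axiom case, the new component covering $w$ brings $x$ into $\dom(\HH_{F'})$ so that $\HH_{F'}$ automatically satisfies $p$, while in the clause case, the factor for the component containing $C$ is by construction a set of assignments all satisfying $C$, giving $\HH_{F'} \models tr(C)$. In either case $\HH_{F'} \in \LL$ and $\HH_{F'} \sqsupseteq \HH_F$, as required.

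The main technical point is the flippability check for a 4-edge component; the remainder of the argument is a direct translation of the combinatorial moves available in the cover game into operations on flippable product-families.
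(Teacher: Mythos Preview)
Your proposal is correct and follows essentially the same approach as the paper: associate to each connected component of a \matching a flippable factor (the paper does this via a table of cases, you spell them out in words), take $\LL$ to be the product-families arising from matchings reachable under Cover's winning strategy, and verify restriction via Choose's component-removal move and extension via Choose's challenge move on the vertex naturally associated to the axiom $p$. Your explicit description of $\LL$ as the reachable positions and the case analysis for the $4$-edge path are slightly more detailed than the paper's presentation, but the underlying construction is identical.
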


\begin{wraptable}{L}{.3\textwidth}
\scriptsize
\footnotesize
\includegraphics[scale=.8]{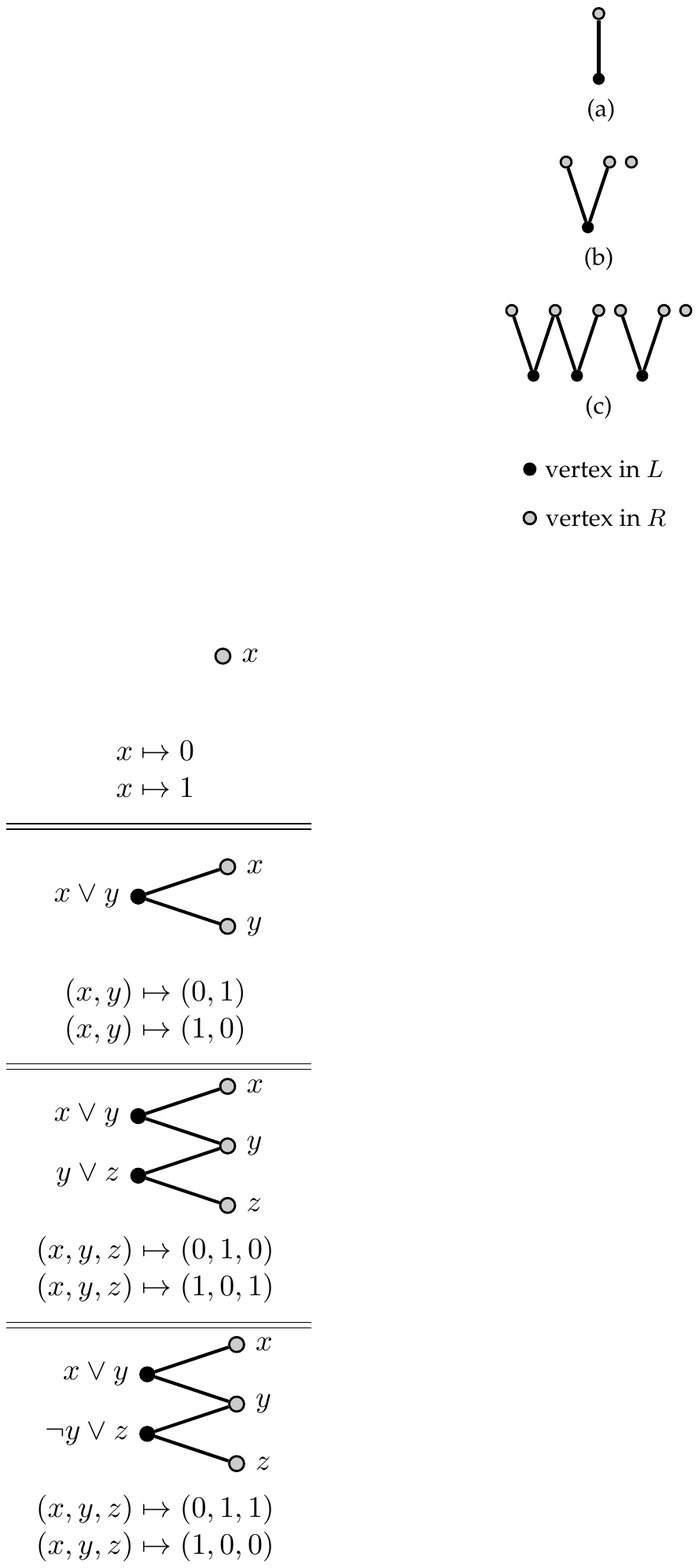}
\centering

\caption{Flippable assignments from \matchings}\label{table:flippable}
\end{wraptable}

\medskip
\noindent\textit{Proof.}
First of all we prove that for every \matching $F$ in $G_\phi$, there exists a flippable product-family of assignments $H_F$ such that $H_F\models L(F)$, $\dom(H_F)=R(F)$, and $\|H_F\|$ is the number of connected components of $F$.

We prove the result by induction on the number of connected components of $F$.
If $F$ is the union of two disjoint \matchings $F',F''$ then by hypothesis $H_{F'}\models L(F')$, $\dom(H_{F'})=R(F')$ and $\|H_{F'}\|$ is the number of connected components of $F'$. And analogously for $F''$. Then, since $R(F')$ and $R(F'')$ are disjoint, $H_F=H_{F'}\otimes H_{F''}$ is well-defined. We immediately see that $H_{F}\models L(F)$, $\dom(H_{F})=R(F)$ and $\|H_{F}\|$ is the number of connected components of $F$.

It remains to consider the case when the \matching $F$ is just one connected component. It is easy to see that all the possibilities can be reduced to those in Table \ref{table:flippable}.

It is straightforward to check that a winning strategy for \Cover in \CoverGame{G_\phi,\mu} defines, by previous observations,  a family \LL of flippable product-families such that for all $\HH\in \LL$
 \begin{enumerate}
 \item for each $\HH' \sqsubseteq \HH$, $\HH' \in \LL$;
 \item if  $\|\HH\|<\mu$, then:  (a) for each $C \in \phi$, there exists a \FPF $\HH'\in \LL$ such that $\HH'\models C$ and $\HH'\sqsupseteq \HH$; and (b) for each variable $x\not \in \dom(\HH)$, there exists a flippable family $\HH'\in \LL$ such that $\HH'\sqsupseteq \HH$ and $x \in \dom(\HH')$.
 \end{enumerate}

We claim that \LL is a \WS{\mu}. The {\em restriction property} is immediate. For the {\em extension property} we use the properties in (2) above: if we have to extend to something in \LL that satisfies a boolean axiom we use property 2.(b), otherwise for all other polynomials in $tr(\phi)$ we use property 2.(a). \qed 

Let $n,\Delta\in \mathbb{N}$ and let $X=\{x_1, \ldots , x_n\}$ be a set of $n$ variables.   The probability distribution $\R(n,\Delta,3)$ is obtained by the following experiment: choose independently uniformly at random $\Delta n$ clauses from the set of all possible clauses with $3$ literals over $X$.  
It is well-known that when $\Delta$  exceeds a certain constant $\theta_3$, $\phi$ is almost surely unsatisfiable \cite{ChvatalS88,BeameP96,Ben-SassonW01,Ben-SassonG03}.
Hence we always consider $\phi\sim \R(n,\Delta,3)$, where $\Delta$ is a constant bigger than $\theta_3$, which implies that $\phi$ is unsatisfiable with high probability.
The proof of the next Lemma is in Appendix \ref{app:high-degree}.

\begin{restatable}{\retheorem}{restateLemmaBigDegree} \label{lem:big-degree}
Let $\Delta > \theta_3$
and  $\phi\sim \R(n,\Delta,3)$ a random $3$-CNF. For every integer $d$ let $S_d$ be the set of variables of $\phi$ appearing in at least $d$ clauses of $\phi$. 
Then for every constant $c>0$ and $\epsilon > 0$, with high probability there exists a constant $D$ such that  for every $d\geq D$, \begin{equation*}
 \frac{72d}{\epsilon}(|S_d|+d)+1 \leq cn.
 \end{equation*}
\end{restatable} 

\begin{theorem} \label{cor:rand}\label{thm:rand}
If $\Delta > \theta_3$
and  $\phi\sim \R(n,\Delta,3)$, then the following statements hold with high probability.  For every semantical \PCR refutation $\Pi$ of $tr(\phi)$, $\Space(\Pi) \geq \Omega(n)$.
 Moreover, every \RES refutation of $\phi$ must pass through a memory
configuration containing $\Omega(n)$ clauses each of width $\Omega(n)$. In particular, each refutation of $\phi$ requires total space  $\Omega(n^2)$.
\end{theorem}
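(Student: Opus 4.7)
The plan is to chain together the three main tools already proved: Theorem \ref{thm:lowerbound} converts a \WS{k} into space lower bounds, Lemma \ref{lem:cover-to-strategy} converts a \Cover-winning strategy for \CovGam into a \WS{\mu}, and Theorem \ref{thm:covergame} provides such a \Cover-winning strategy whenever the underlying bipartite graph has the required expansion and controlled high-degree vertices on the right. It therefore suffices to show that, with high probability, the adjacency graph $G_\phi$ of $\phi\sim\R(n,\Delta,3)$ satisfies the three hypotheses of Theorem \ref{thm:covergame} with parameters giving $\mu=\Omega(n)$.

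Fix a real number $\epsilon<\frac{1}{23}$. I would first dispose of the left-degree-$3$ hypothesis by the standard observation that with high probability every clause of a random $3$-CNF uses three distinct variables; the few degenerate clauses, if any, can be discarded without affecting expansion or the asymptotics. Second, the $(s, 2-\epsilon/2)$-bipartite expansion of $G_\phi$ for $s=c_0 n$ with some $c_0=c_0(\Delta,\epsilon)>0$ is a classical property of random $3$-CNFs with $\Delta>\theta_3$ that holds with high probability, see \cite{ChvatalS88,BeameP96,Ben-SassonW01,Ben-SassonG03}. The third hypothesis of Theorem \ref{thm:covergame}, namely the uniform bound $\frac{72d}{\epsilon}(|S_d|+d)+1\le \frac{s}{2}$ for every $d\ge D$, is exactly the content of Lemma \ref{lem:big-degree}: applying it with $c=c_0/2$ provides, with high probability, a constant $D=D(\Delta,\epsilon,c_0)$ for which the bound holds with the same $s=c_0 n$ fixed by expansion.

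With all three hypotheses in place, Theorem \ref{thm:covergame} gives a winning strategy for \Cover in \CoverGame{G_\phi,\mu} with $\mu=\frac{\epsilon s}{144 D}=\Omega(n)$. Lemma \ref{lem:cover-to-strategy} converts this into a \WS{\mu} for $tr(\phi)$, and Theorem \ref{thm:lowerbound} then yields $\Space(\Pi)\ge \mu/4 = \Omega(n)$ for every semantical \PCR refutation, together with a memory configuration containing at least $(\mu-1)/2 = \Omega(n)$ clauses of width at least $(\mu-1)/2 = \Omega(n)$ in every \RES refutation, hence total space $\Omega(n^2)$. The step I expect to require the most care is purely bookkeeping: one must calibrate the constant $c$ passed to Lemma \ref{lem:big-degree} to be slightly smaller than the $c_0$ produced by the expansion property, so that the high-degree bound is compatible with the value of $s$ appearing in the expansion hypothesis. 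Once this calibration is done, all the substantive combinatorial content has already been established, and the final assembly is mechanical.
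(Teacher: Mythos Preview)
Your proposal is correct and follows essentially the same route as the paper: invoke the classical $(\gamma n,2-\delta)$ expansion of $G_\phi$, feed $c=\gamma/2$ into Lemma~\ref{lem:big-degree} to verify the high-degree hypothesis of Theorem~\ref{thm:covergame}, and then chain Theorem~\ref{thm:covergame} $\to$ Lemma~\ref{lem:cover-to-strategy} $\to$ Theorem~\ref{thm:lowerbound}. Your extra remark about the left-degree-$3$ hypothesis is a detail the paper glosses over, but it is indeed handled trivially as you describe.
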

\begin{proof}
Let $G_\phi$ be the adjacency graph of $\phi$. It is well known that $G_\phi$ is a \expander{\gamma n, 2-\delta}, for every $\delta>0$ \cite{ChvatalS88,BeameP96,Ben-SassonW01,Ben-SassonG03}. 
Hence in particular for $0<\delta < \frac{1}{23}$ and using Lemma \ref{lem:big-degree} with $c=\frac{\gamma}{2}$, we satisfy all the hypotheses of Theorem \ref{thm:covergame}. Thus, \Cover wins the cover game $\CoverGame{G_\phi,\mu}$ for $\mu = \Omega(n)$.  Lemma \ref{lem:cover-to-strategy} provides a \WS{\Omega(n)} and by Theorem \ref{thm:lowerbound} we have the monomial space lower bound in semantical \PCR and the total space lower bound in \RES.
\end{proof}

\bibliographystyle{alpha}

\bibliography{merge}

\appendix

\section{Proof of Theorem \ref{thm:lowerbound}}\label{app:k-winning}

A \dfn{piecewise (p.w.) assignment} $\alpha$ of a set of variables $X$ is a set of non-empty partial assignments to $X$ with pairwise disjoint domains.
We will sometimes call the elements of $\alpha$ the \emph{pieces} of $\alpha$. 
A piecewise assignment gives rise to  a partial assignment $\bigcup \alpha$ to $X$ together
with a partition of the domain of $\bigcup \alpha$. 
For piecewise assignments $\alpha, \beta$ we will write $\alpha \sqsubseteq \beta$ to mean 
that every piece of $\alpha$ appears in $\beta$. We will write $\| \alpha \|$ to mean the number
of pieces in~$\alpha$. Note that these are formally exactly the same as $\alpha \subseteq \beta$
and $|\alpha|$, if we regard $\alpha$ and $\beta$ as sets.

\begin{definition}[\free{r}\cite{BonacinaGT14}]\label{def:r-free}
A family $\FF$ of p.w. assignments is 
\emph{\free{r} for a CNF $\phi$} if it has the following properties:
\begin{description}
\item[(Consistency)]
No $\alpha \in \FF$ falsifies any clause from $\phi$;
\item[(Retraction)]
If $\alpha \in \FF$, $\beta$ is a p.w. assignment and $\alpha^\star\sqsubseteq \beta \sqsubseteq \alpha$, then $\beta \in \FF$;
\item[(Extension)]
If $\alpha \in \FF$ and $\|\alpha\| < r$, then for every variable 
$x \notin \dom (\alpha)$, there exist $\beta_0, \beta_1 \in \FF$ 
with $\alpha \sqsubseteq \beta_0, \beta_1$ such that 
$\beta_0(x)=0$ and $\beta_1(x)=1$.
\end{description}
\end{definition}

\begin{theorem}[\cite{BonacinaGT14}]\label{thm:r-free}
Let $\phi$ be an unsatisfiable CNF formula.
If there is a family of p.w. assignments which is \free{r} for $\phi$,
then any resolution refutation of $\phi$ must pass through a memory
configuration containing at least $\frac{r}{2}$ clauses each of width at least $\frac{r}{2}$. In particular, the refutation requires total space at least $\frac{r^2}{4}$.
\end{theorem}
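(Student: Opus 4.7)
The plan is to derive a contradiction from the assumption that every memory configuration in a given \RES refutation $\Pi = \langle \memory{0}, \ldots, \memory{\ell}\rangle$ of $\phi$ contains fewer than $r/2$ clauses of width $\geq r/2$. I will inductively construct, alongside the refutation, a sequence of p.w. assignments $\alpha_0, \ldots, \alpha_\ell$ in $\FF$ such that each $\alpha_i$ satisfies every clause of $\memory{i}$, while maintaining the invariant $\|\alpha_i\| < r$ so that the extension property remains available at every step. The contradiction arrives at step $\ell$, where $\bot \in \memory{\ell}$ but no partial assignment can satisfy $\bot$, yielding the desired lower bound.

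The inductive construction handles each rule in turn, starting from $\memory{0} = \emptyset$ with the trivial $\alpha_0$ of zero pieces. For an \textsc{Axiom Download} $\memory{i} = \memory{i-1} \cup \{C\}$ with $C \in \phi$: by \emph{consistency}, $\alpha_{i-1}$ does not falsify $C$, so either $\alpha_{i-1}$ already satisfies $C$ (take $\alpha_i = \alpha_{i-1}$) or some variable $x$ of $C$ lies outside $\dom(\bigcup \alpha_{i-1})$, whereupon the \emph{extension} property, applicable because $\|\alpha_{i-1}\| < r$, yields $\alpha_i \in \FF$ extending $\alpha_{i-1}$ by a single new piece that sets the relevant literal of $C$ to true. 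For an \textsc{Inference Adding} step introducing the resolvent $D$ of two premises $C' \vee x$ and $C'' \vee \neg x$ already present in $\memory{i-1}$, a direct case analysis on the status of $x$ under $\bigcup \alpha_{i-1}$ shows that $\alpha_{i-1}$ must satisfy $D$, so set $\alpha_i = \alpha_{i-1}$. For an \textsc{Erasure}, apply the \emph{retraction} property to prune pieces of $\alpha_{i-1}$ that no longer witness any clause remaining in $\memory{i}$, obtaining $\alpha_i \in \FF$ of no larger rank.

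The heart of the argument is the bookkeeping that enforces $\|\alpha_i\| < r$ throughout. One associates each piece of $\alpha_i$ with a specific clause of $\memory{i}$ that it witnesses, sharing pieces across clauses whenever a single piece happens to satisfy several, and prunes via retraction as soon as a witnessed clause is erased from memory. Under the contradictory hypothesis, the wide-clause witnesses contribute strictly fewer than $r/2$ pieces; a parallel argument, exploiting that narrow clauses involve few variables and that their witnesses can be consolidated and retracted after erasures, bounds the narrow-clause contribution so that the total piece count stays strictly below $r$. The main obstacle is precisely this accounting: one must ensure that a proliferation of narrow clauses in memory does not inflate $\|\alpha_i\|$ past the extension threshold. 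Once the invariant is secured, the induction runs all the way to step $\ell$ and contradicts $\alpha_\ell \models \memory{\ell}$, forcing the existence of a configuration with at least $r/2$ clauses each of width at least $r/2$. The in-particular statement on total space is then immediate, since such a configuration carries at least $(r/2)(r/2) = r^2/4$ variable occurrences.
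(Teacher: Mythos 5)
The paper itself offers no proof of Theorem~\ref{thm:r-free}: it is quoted from \cite{BonacinaGT14} and used as a black box, so there is no in-text argument to compare yours against. Judged on its own, your proposal contains a genuine gap, and it sits exactly at the point you yourself flag as ``the main obstacle''. Your invariant is that $\alpha_i\in\FF$ satisfies \emph{every} clause of $\memory{i}$ while $\|\alpha_i\|<r$. But the hypothesis assumed for contradiction only bounds the number of \emph{wide} clauses (width at least $r/2$) in a configuration; it places no bound whatsoever on the number of narrow clauses, and a configuration may contain arbitrarily many pairwise variable-disjoint narrow clauses (for instance, the refutation may simply download and retain many variable-disjoint axioms of a $3$-CNF). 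The family \FF gives you no way to ``consolidate'' witnesses: its only guaranteed closure properties are retraction (deleting pieces) and extension (adding a piece that sets one prescribed variable, and only while $\|\alpha\|<r$); there is no operation merging several pieces into one. When the pieces of \FF have small domains --- which is precisely the situation for the families this paper actually constructs in Appendix~\ref{app:k-winning}, whose pieces arise from connected components of \matchings and assign at most $3$ variables --- satisfying $m$ variable-disjoint clauses forces $\Omega(m)$ pieces, and $m$ can exceed $r$; indeed \FF need not contain \emph{any} assignment satisfying such a configuration. So the invariant you propose is not just delicate to maintain, it is in general unmaintainable, and the sentence asserting that ``a parallel argument \ldots bounds the narrow-clause contribution'' is exactly the missing proof.

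The steps you do spell out are the routine ones and are fine: axiom downloads via consistency plus extension, the case analysis showing a resolvent of two satisfied premises is satisfied, retraction at erasures, the contradiction at $\bot$, and the count $(r/2)(r/2)=r^2/4$ for the ``in particular'' clause. But the entire content of the theorem of \cite{BonacinaGT14} is how to get around the narrow-clause accumulation: any correct argument must require something strictly weaker of the narrow clauses than carrying a satisfying witness for each of them in memory (and must then handle the complications this weakening creates at inference and download steps). Since your sketch insists on per-clause satisfaction of the whole configuration and leaves the resulting accounting unproved, it does not establish the statement as written.
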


By this theorem, in order to prove the total space lower bound of Theorem \ref{thm:lowerbound} we just have to prove that given a \WS{k} for $tr(\phi)$ we can build a \free{(k-1)} family for $\phi$.

\medskip
Let \LL be the \WS{k}.
Define the  $(k-1)$-free  family \FF as follows:
$\alpha \in \FF$ if and only if  there exists $H_1\otimes \ldots \otimes H_t \in \LL$ such that $\alpha = \alpha_1\cup\ldots \cup \alpha_t$
with $\alpha_i\in H_i$ and $t \leq k-1$. 
The p.w. structure of $\alpha$ is inherited from the domain-disjointness of $H_1\otimes \ldots \otimes H_t$; in particular, $\|\alpha\|=\|H_1\otimes \ldots \otimes H_t\|$.
The {\em retraction property} of \FF is immediate from the corresponding property of \LL.

To prove the {\em consistency property} of \FF assume, by contradiction, that there is an $\alpha \in \FF$ such that $\alpha$ falsifies some clause $C \in \phi$. 
Since $||\alpha||\leq k-1<k$,  there exists $\HH= H_1\otimes \ldots \otimes H_t \in \LL$ such that $\alpha\in \HH$ and $\|\alpha\|=\|\HH\|$. By the extension property of \LL, there is an $\HH'\sqsupseteq \HH$ such that $\HH'\models tr(C)$.  In particular there exists some partial assignment $\beta\supseteq \alpha$ such that $\beta\models tr(C)$.  By construction, for every assignment $\gamma$, $\gamma\models tr(C)$ if and only if $\gamma\models C$.  Thus $\beta\models C$, which is impossible since $\alpha$ falsifies $C$.

For  the {\em extension property} let $\alpha \in \FF$, with $||\alpha||<k-1$ and let $x$ be a variable of $\phi$ not in $\dom(\alpha)$. By construction, there exists some $\HH\in \LL$ such that $\alpha\in \HH$, $\|\alpha\|=\|\HH\|$ and $\dom(\alpha)=\dom(\HH)$. By the extension property of \FF there exists some flippable $\HH'\in \LL$ such that $\HH'\sqsupseteq \HH$ and $\HH'\models x^2-x$. By taking restrictions in \LL we can suppose that $\|\HH'\|=\|\HH\|+1$.
Hence there exist $\beta_0,\beta_1\in \FF$ extending $\alpha$, setting $x$ respectively to $0$ and $1$ and such that  $||\beta_0||=||\beta_1||=\|\alpha\|+1\leq k-1$. \qed

\section{Proofs of the Lemmas from Section \ref{sec:covergame}}\label{app:lemmas}

For  convenience we restate here also Lemma \ref{lem:smallC}.

\restatetheorem{\restateLemmaSmallC*}

\restatetheorem{\restateLemmaEmpty*}

\begin{proof}
Apply Lemma \ref{lem:smallC} with $A=\emptyset$ and $B=\emptyset$.
\end{proof}

\restatetheorem{\restateLemmaComponentRemoval*}

\begin{proof}
Let $A'=A\setminus L(F)$ and $B'=B\setminus R(F)$ and suppose, by contradiction, that $(A',B')$ does  not have the \matchingprop. 
By Lemma \ref{lem:smallC}, it is sufficient to prove that for each set $C\subseteq L\setminus A'$ with $|C|<\frac{2}{\epsilon}|B'|$, there is a \matching in $G_{A',B'}$ covering $C$. Let $C'=C\cap L(F)$ and $C''=C\setminus C'$. By construction, $F$ is a \matching such that $L(F)\subseteq A$, $R(F)\subseteq B$ and $F$ covers $C'$. Moreover, we have that
 $$
 |C''|\leq|C|<\frac{2}{\epsilon}|B'|<\frac{2}{\epsilon}|B|\stackrel{(\star)}{\leq} s,
 $$
 where the inequality $(\star)$ is by hypothesis. Hence there exists a \matching $F''$ of $C''$ in $G_{A,B}$, and so $F \cup F''$ is a \matching covering $C$ in $G_{A',B'}$.
\end{proof}

\restatetheorem{\restateLemmaCoveringL*}

\begin{proof}
Fix $v\in L\setminus A$ and let $\Pi$ be the set of all \matchings $F$ in $G_{A,B}$, covering $v$ and such that $F$ is connected.

Since $1 \leq s$ and $(A,B)$ has the \matchingprop, we know that
$\Pi$ is non-empty. 
For every $F\in \Pi$, let $(A_F,B_F)$ be the pair $(A \cup L(F), B \cup R(F))$, and suppose for a contradiction that for every $F \in \Pi$, $(A_F, B_F)$ does not have the \matchingprop. 
By Lemma \ref{lem:smallC}, for every $F \in \Pi$ there is a set
$C_F \subseteq L \setminus A_F$
with $|C_F| < \frac{2}{\epsilon}|B_F|$ and such that there is no \matching 
of $C_F$ in $G_{A_F,B_F}$.

Let $C = \bigcup_{F \in \Pi} C_F$.
Then 
\begin{equation*}
|C| \leq \sum_{F\in \Pi}|C_F|<|\Pi|\frac{2}{\epsilon}(|B|+3)\leq 12 d\frac{2}{\epsilon}(|B|+3),
\end{equation*}
since $|\Pi| \leq 3+ 3\cdot 2\cdot (d-1) \cdot 2\leq 12 d$ and $|B_F|\leq |B|+ 3$. Hence,
by our assumption about the size of $|B|$, we have that $|C \cup \{v \}| \leq s$.
Furthermore, $C \cup \{ v \} \subseteq L \setminus A$, so by the fact that $(A,B)$ has the \matchingprop, there is a \matching $F'$ covering $C \cup \{ v \}$ in $G_{A,B}$.

There must be some $F \in \Pi$ such that $F$ is a connected component of $F'$. Let $F''$ be $F'$
with the component $F$ removed. Then $F''$ 
is a \matching in $G_{A_F,B_F}$ and $F''$ covers $C_F$, contradicting the choice of $C_F$. 
\end{proof}

\restatetheorem{\restateLemmaCoveringR*}

\begin{proof}
Fix $v \in R\setminus B$ and let $D$ be $N_G(v)\setminus A$. By hypothesis $|D|\leq d$. If $|D|=0$, then $N_G(v)\subseteq A$, and so we can cover $v$ by taking $F$ to be
the \matching consisting only of the vertex $v$. This is a valid \matching covering $v$ and clearly $(A \cup L(F), B \cup R(F))$ has the \matchingprop.

If $|D|>0$, by the cardinality condition on $B$, we can apply Lemma \ref{lem:left-vertex-cover} $|D|$ times obtaining a \matching $F$ in $G_{A,B}$ covering $D$ and such that $(A \cup L(F), B \cup R(F))$ has the \matchingprop.

Now, since $N_G(v)\subseteq A\cup L(F)$, it follows that $(A\cup L(F),B\cup R(F)\cup \{v\})$ has the \matchingprop. Either $v$ is covered by $F$, or it is possible to add $\{v\}$ as a new connected component to $F$ while still maintaining the property of being a \matching in $G_{A,B}$.
\end{proof}

\section{Proof of Proposition \ref{thm:epsilon}}\label{appendix:epsilon}

As promised, we now prove Proposition \ref{thm:epsilon}, here rephrased in terms of hypergraphs.  

\begin{figure}[h]
\centering
\includegraphics{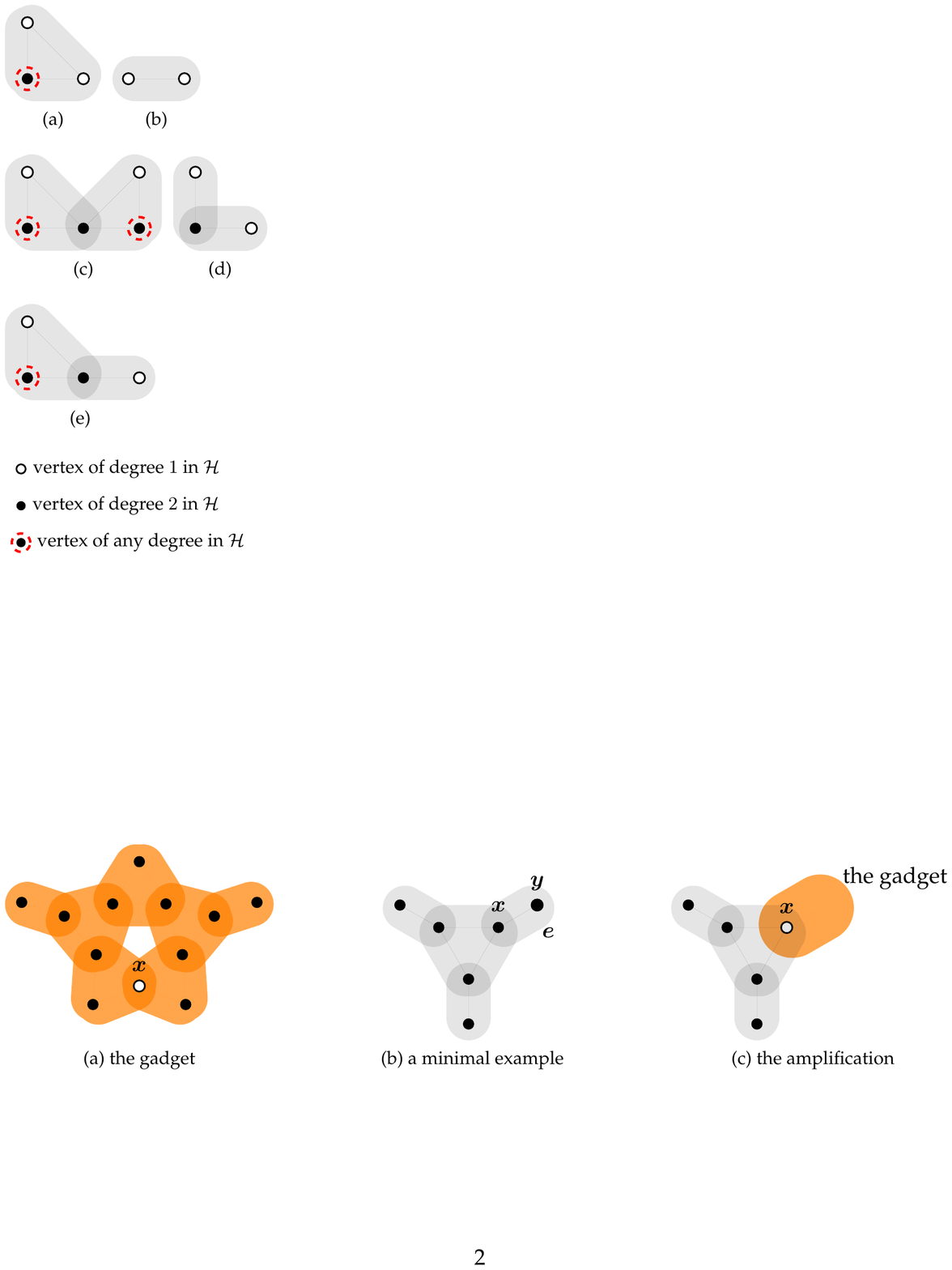}
\caption{The construction}
\label{fig:ampl}
\end{figure}

\begin{proposition}For every $\epsilon>\frac{1}{3}$, there exists a hypergraph $\mathcal{H}_\epsilon$ such that $\mathcal{H}_\epsilon$ has no isolated vertices, each hyperedge of $\mathcal{H}_\epsilon$ has size 2 or 3, $|V(\mathcal{H})|\geq (2-\epsilon)|E(\mathcal{H})|$, every proper subset of $E(\mathcal{H}_{\epsilon})$ has a 2-path cover, but $\mathcal{H}_{\epsilon}$ does not have a 2-path cover. 
\end{proposition}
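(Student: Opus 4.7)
\noindent\textit{Proof plan.} The plan is to build, for each $\epsilon > 1/3$, a minimal obstruction $\mathcal{H}_\epsilon$ (no 2-path cover, but every proper sub-hypergraph admits one) whose vertex-to-edge ratio satisfies $|V|/|E| \geq 2 - \epsilon$. Since $2-\epsilon < 5/3$ exactly when $\epsilon > 1/3$, it is enough to exhibit a family of minimal obstructions whose ratios tend to $5/3$ from below and then select a member of the family close enough to $5/3$.

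I would first identify a small base obstruction. A natural candidate is the 3-uniform hypergraph $\mathcal{H}_0$ on four vertices with the three hyperedges $\{1,2,3\}, \{1,2,4\}, \{1,3,4\}$. A valid 2-path cover of three hyperedges decomposes into vertex-disjoint components of type $K_2$ or $P_3$, which together require at least five vertices; since only four are available, no valid cover exists. Minimality is immediate because any two of these hyperedges can be covered by two vertex-disjoint pairs. The ratio is only $4/3$, so amplification is needed.

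The key step is to design a small \emph{amplification gadget} that can be attached to the current obstruction so as to (i) preserve obstruction-ness, (ii) preserve minimality, and (iii) strictly increase the global vertex-to-edge ratio. The gadget has one or two distinguished \emph{interface} vertices shared with the rest of the hypergraph; internally it contains a cascade of hyperedges (size-2 or small size-3 ``guards'') whose forced choices propagate to the interface, pinning the interface's contribution to the 2-path cover to be exactly what the replaced piece contributed. This preserves obstruction-ness. Simultaneously, removing any single gadget hyperedge breaks the cascade and frees the interface to contribute any of several alternatives, which is enough to cover the remainder; this preserves minimality. By tuning the gadget to add roughly five new vertices for every three new hyperedges, repeated application drives $|V|/|E|$ arbitrarily close to $5/3$.

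The main obstacle is verifying the two gadget properties simultaneously. Obstruction-ness requires showing that any hypothetical cover of the amplified hypergraph would propagate through every gadget and induce a cover of the base, contradicting the base's obstruction property. Minimality requires checking, for every hyperedge $e$ of the amplified construction, that $\mathcal{H}_\epsilon \setminus \{e\}$ admits a cover: if $e$ lies inside a gadget, its removal releases the associated interface; if $e$ is inherited from the base, its removal unblocks the base. An induction on the number of gadget insertions, carefully tracking which pairs each gadget interface is allowed to contribute to the cover after a single internal removal, is the technical heart of the argument.
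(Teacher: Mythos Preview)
Your overall architecture matches the paper's exactly: start from a small minimal obstruction and repeatedly attach an amplification gadget that preserves both obstruction-ness and minimality while pushing $|V|/|E|$ toward $5/3$. Your base $\mathcal{H}_0$ (four vertices, three size-$3$ edges) is a valid minimal obstruction, and your counting argument for it is correct; the paper instead starts from a base with $6$ vertices and $4$ edges containing a pendent size-$2$ edge, but that difference is cosmetic.

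The genuine gap is that you describe the gadget only by its desiderata and never construct it, and that construction is the actual content of the proof. The paper exhibits an explicit gadget $\mathcal{G}$ with a distinguished vertex $x$ satisfying two clean properties: (i) every $2$-path cover of $\mathcal{G}$ uses $x$, and (ii) every proper subset of $E(\mathcal{G})$ admits a $2$-path cover avoiding $x$. These two conditions make the induction go through with no case analysis of ``which pair the interface contributes'': (i) forces any hypothetical $2$-path cover of the amplified hypergraph to restrict to one of the base (obstruction preserved), and (ii) means that deleting any gadget edge frees $x$ completely, so the rest can be covered as before (minimality preserved). The attachment is along a pendent size-$2$ edge $\{x,y\}$: delete the degree-$1$ vertex $y$ and glue $\mathcal{G}$ at $x$. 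Each iteration adds $10$ vertices and $6$ edges, giving ratio $(6+10n)/(4+6n)\to 5/3$, and $\mathcal{G}$ itself carries a pendent size-$2$ edge so the process can be repeated. Your base $\mathcal{H}_0$ has no pendent size-$2$ edge, so to execute your plan you would need either a genuinely different attachment mechanism (which you have not specified) or a base like the paper's that is coordinated with the gadget's interface.
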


\begin{proof}
Let $\epsilon>\frac{1}{3}$ and consider the gadget $\mathcal{G}$ shown in Figure~\ref{fig:ampl}.(a).  It is easy to verify that every 2-path cover of $\mathcal{G}$ must cover the vertex $x$.  
Next note that the hypergraph $\mathcal{H}$ shown in Figure~\ref{fig:ampl}.(b) is obviously not 2-path coverable, but every proper subset of $E(\mathcal{H})$ is
2-path coverable.  We have $\frac{|V(\mathcal{H})|}{|E(\mathcal{H})|}=\frac{6}{4}$.  However, we can increase this ratio via the amplification trick shown in 
Figure~\ref{fig:ampl}.(c).  

That is, let $e$ be a hyperedge of $\mathcal{H}$ of size 2.  Label the vertices of $e$ as $x$ and $y$, where $y$ has degree 1.  
Let $\mathcal{H}_1$ be the hypergraph obtained from $\mathcal{H}$ by deleting $y$ and then gluing $\mathcal{G}$ to  $\mathcal{H}-y$ along $x$.  Since every
2-path cover of $\mathcal{G}$ must use the vertex $x$, $\mathcal{H}_1$  does not have a 2-path cover.  On the other hand, since every proper subset
of $E(\mathcal{G})$ has a 2-path cover avoiding $x$, it follows that every proper subset of $E(\mathcal{H}_1)$ has a 2-path cover.  Note that this amplification 
trick increases the number of vertices of $\mathcal{H}$ by 10 and the number of edges of  $\mathcal{H}$ by 6.  
Moreover, we can repeat this amplification trick arbitrarily many times since $\mathcal{G}$ also has pendent edges of size 2.  So, choose $n$ such that $\frac{6+10n}{4+6n} \geq 2-\epsilon$ and
take  $\mathcal{H}_\epsilon$ to be the graph obtained from $\mathcal{H}$ by performing the amplification trick $n$ times.  
\end{proof}

\section{Proof of Lemma \ref{lem:big-degree}}\label{app:high-degree}

\restatetheorem{\restateLemmaBigDegree*}

\begin{proof}
Let $G_\phi$ be the adjacency graph of $\phi$. First of all we show that w.h.p. there are at most $\frac{e n }{2^d}$ many variable nodes of degree $d$ for every $d\geq 24e\Delta$ and that w.h.p. there is no variable node of degree bigger than $\log n$.
First note that the expected number of variable nodes of degree at least $\log n$ is 

\begin{equation*}
n \binom{\Delta n}{\log n} \bfrac{3}{n-2}^{\log n} 
\le
n \bfrac{e\Delta n}{\log n}^{\log n} \bfrac{3}{n-2}^{\log n} 
= 
o(1).
\end{equation*}
So w.h.p. there are no such nodes.
Let $d \ge 24e\Delta$. The probability that there are $\frac{e n }{2^d}$ many variable nodes of degree $d$ is at most 
\begin{equation*}
\binom{n}{\frac{e n }{2^d}} \sqbs{\binom{\Delta n}{d} \bfrac{3}{n-2}^d}^{\frac{e n }{2^d}} 
\leq 
\bfrac{en}{\frac{en }{2^d}}^\frac{e n }{2^d}\sqbs{\bfrac{e\Delta n}{d}^d \bfrac{3}{n-2}^d}^{\frac{e n }{2^d}}
\leq 
\bfrac{12e\Delta}{d}^{\frac{ed n }{2^d}} 
\leq 
\bfrac{1}{2}^{\frac{ed n }{2^d}},
\end{equation*}
so, by the union bound, the probability that there exists any $d$ between $24e\Delta$ and $\log n$ such that there are $\frac{e n }{2^d}$ many variable nodes of degree $d$ is at most $ \sum_{24e\Delta \leq d \leq \log n} \bfrac{1}{2}^{\frac{ed n }{2^d}}$. 
To bound this sum, note that the ratio of consecutive terms is 
\begin{equation*}
2^{\frac{edn}{2^d} - \frac{e(d+1) n}{2^{d+1}}} = 2^{\frac{e(d-1)n}{2^{d+1}}} \ge 2
\end{equation*}
 for $d$ in this range, and so the sum is of the order of its last term, which is $\bfrac{1}{2}^{\frac{e n \log n}{2^{\log n}}} = o(1)$. 

So we have that w.h.p.
\begin{equation*}
|S_d|\leq \sum_{d'\geq d}\frac{en}{2^{d'}}\leq \frac{2en}{2^d}
\end{equation*}
and, for (a not yet chosen constant) $D\geq 24e\Delta$, we have that for each $d$ such that $D\leq d\leq \log n$:
\begin{eqnarray*}
 \frac{72d}{\epsilon}(|S_d|+d)+1 
\leq
 \frac{72d}{\epsilon}\left(\frac{2en}{2^d}+d\right)+1 
\leq 
\frac{72D}{\epsilon}\frac{2en}{2^D}+O(\log^2 n)
\leq
cn,
\end{eqnarray*}
where the last inequality holds if $D$ is a sufficiently large constant.  
\end{proof}

\end{document}